\newtheorem{theorem}{Theorem}[section]
\newtheorem{lemma}[theorem]{Lemma}
\newtheorem{definition}[theorem]{Definition}
\newtheorem{assumption}[theorem]{Assumption}
\begin{document}

\title{Towards Differentially Private Truth Discovery for Crowd Sensing Systems \titlenote{This work was done when the authors Yaliang Li, Houping Xiao and Zhan Qin were at State University of New York at Buffalo.}}

\numberofauthors{1} 
\author{
\alignauthor Yaliang Li{\small$~^{1}$}, Houping Xiao{\small$~^{2}$}, Zhan Qin{\small$~^{3}$}, Chenglin Miao{\small$~^{4}$}, \\ Lu Su{\small$~^{4}$}, Jing Gao{\small$~^{4}$}, Kui Ren{\small$~^{4}$}, and Bolin Ding{\small$~^{5}$} \\
\affaddr{$^{1}$\,Tencent Medical AI Lab, Palo Alto, CA USA}  \\
\affaddr{$^{2}$\,Georgia State University, Atlanta, GA USA}  \\
\affaddr{$^{3}$\,University of Texas at San Antonio, San Antonio, TX USA} \\
\affaddr{$^{4}$\,State University of New York at Buffalo, Buffalo, NY USA}  \\
\affaddr{$^{5}$\,Alibaba Group, Bellevue, WA USA} \\
\affaddr{$^{1}$\,yaliangli@tencent.com, $^{2}$\,hxiao@gsu.edu, $^{3}$\,zhan.qin@utsa.edu}  \\
\affaddr{$^{4}$\,\{cmiao, lusu, jing, kuiren\}@buffalo.edu, $^{5}$\,bolin.ding@alibaba-inc.com}
}

\maketitle

\begin{abstract}
Nowadays, crowd sensing becomes increasingly more popular due to the ubiquitous usage of mobile devices. However, the quality of such human-generated sensory data varies significantly among different users. To better utilize sensory data, the problem of truth discovery, whose goal is to estimate user quality and infer reliable aggregated results through quality-aware data aggregation, has emerged as a hot topic.  Although the existing truth discovery approaches can provide reliable aggregated results, they fail to protect the private information of individual users. Moreover, crowd sensing systems typically involve a large number of participants, making encryption or secure multi-party computation based solutions difficult to deploy. To address these challenges, in this paper, we propose an efficient privacy-preserving truth discovery mechanism with theoretical guarantees of both utility and privacy. The key idea of the proposed mechanism is to perturb data from each user independently and then conduct weighted aggregation among users' perturbed data. The proposed approach is able to assign user weights based on information quality, and thus the aggregated results will not deviate much from the true results even when large noise is added. We adapt local differential privacy definition to this privacy-preserving task and demonstrate the proposed mechanism can satisfy local differential privacy while preserving high aggregation accuracy. We formally quantify utility and privacy trade-off and further verify the claim by experiments on both synthetic data and a real-world crowd sensing system.
\end{abstract}

\section{Introduction}
\label{sec:introduction}

Today, we witness the explosion of sensory data which are continuously generated by countless individuals all over the world through the increasingly capable and affordable mobile devices, such as smartphones, smartwatches, and smartglasses. The information mined from such massive human-generated sensory data provides critical insights for a wide spectrum of applications, including healthcare, smart transportation and many others. While sensory data are potentially huge treasure troves, it remains a challenging task to extract truthful information from the noisy, conflicting and heterogeneous data submitted by the numerous mobile device users.

In such scenarios, it is essential to aggregate the sensory data about the same set of objects collected from a crowd of users to get true facts or aggregated results. The key factor in aggregating noisy sensory data is to capture the difference in information quality among different users. Some users provide correct and useful information while others may submit noisy or fake information due to hardware quality, environment noise, or even the intent to deceive and get rewards. Therefore, the naive approach that regards all the users equally in aggregation may fail to derive reliable aggregated results. Instead, we hope to capture the probability of a user providing accurate information in the form of user weight and incorporate it into the aggregation so that final output is closer to the information provided by reliable users. The challenge is that user weight is usually unknown \emph{a priori} in practice and has to be inferred from the sensory data.

To address this challenge, a series of \emph{truth discovery} mechanisms \cite{yin_kdd07,luna_survey_vldb12,li2015survey} are proposed to tackle the problem of estimating user weight and inferring reliable aggregated information from noisy crowdsourced sensory data, and have been successfully applied to various domains such as social sensing \cite{WKL+12}, air quality monitoring \cite{meng2015truth}, and network quality measurement \cite{liweighted}. In these applications, users can share their sensory data, and an accurate aggregation can lead to important knowledge for various applications and systems. As both user weights and aggregated results are unknown, truth discovery approaches estimate them simultaneously based on the following two principles: (1) If the information provided by a particular user is closer to the aggregated results, this user will be assigned a higher weight. (2) If a user has a higher weight, his information will be counted more in the aggregation. Based on these principles, truth discovery approaches take crowdsourced sensory data as input, and then iteratively estimate user weights and update aggregated results. Different from simple aggregation such as averaging or voting, truth discovery conducts weighted aggregation in which user weights are automatically estimated from the sensory data.

\textbf{Privacy Concerns}. One important component missing in these truth discovery approaches is the protection of user privacy. These approaches assume that the sensory data has been collected from users by a centralized server. However, during this data collection procedure, users may have concerns in sharing personal or sensitive information \cite{ganti2008poolview,gulisano2016bes,kayes2015privacy,pyrgelis2016privacy,agir2014user,drosatos2014privacy,pournajaf2016participant,xue2011distributed}. For example, individuals' GPS data are important sources for traffic monitoring and smart transportation, but contain sensitive information that users might not want to release. Aggregating health data through wearable devices can lead to better discovery of new drugs' effects, but it may suffer from the risk of information leaking about each participant. In these and many more application scenarios, the final aggregation results can be public and beneficial to the community or society, but the data from each individual user should be well protected.

A possible solution to tackle this challenge is to adopt encryption or secure multi-party computation techniques in truth discovery \cite{miao2015cloudpp,miao2017privacy,zheng2017privacy,zheng2018learning}. However, these techniques typically involve time-consuming computation or expensive communication among mobile device users. Therefore, although these techniques can achieve strong protection for users, it is difficult to deploy them in large-scale truth discovery tasks which require highly efficient and non-coordinated privacy preserving strategies.

\textbf{Proposed Mechanism}. In the light of these challenges, we propose to address such user privacy concerns in truth discovery by providing an effective and efficient mechanism. The proposed privacy-preserving truth discovery mechanism for aggregating sensory data can guarantee both good utility and strong privacy. The proposed mechanism works as follows. Each user samples independent noise from a privately known noise distribution and adds the sampled noise to their data. After collecting perturbed data from all the users, the server conducts weighted aggregation by weighing each user's information properly to obtain final output. We demonstrate the ability of the proposed mechanism to tolerate high noise with negligible loss in aggregation accuracy. This advantage is brought by the fact that the proposed mechanism can automatically adjust user weights, and thus can lower a user's weight when high noise is added so that the effect of noise on the final aggregation results can be significantly reduced.

The theoretical analysis of the proposed mechanism is conducted from the following aspects: (1) We quantify the loss in aggregation accuracy that is caused by the noise added to the input data, and show the proposed mechanism's advantage that the accuracy does not drop much even with large noise. (2) Another advantage of the mechanism is that the noise distribution adopted by each user is unknown to the public. This scheme is easy to implement and requires no communication among users. Formally, we adopt local differential privacy to quantify user privacy protection in truth discovery scenarios. (3) The trade-off between aggregation accuracy (utility) and the defined local differential privacy is analyzed, which shows how both aggregation accuracy and end user privacy can be guaranteed simultaneously.

\textbf{Contributions}. In summary, our contributions are:
\begin{itemize}
\item We propose a privacy-preserving truth discovery mechanism for crowdsourced sensory data aggregation, which consists of perturbations on users' data and weighted aggregation on perturbed data. The proposed mechanism tackles this challenging privacy preserving task with guarantees of both accuracy and privacy.
\item We formally define aggregation accuracy and privacy for the studied task, and theoretically quantify the range of noise that can be adopted to achieve good utility and strong privacy.
\item Experiments on both synthetic data and a real-crowd sensing system validate the claim that the proposed mechanism can generate accurate aggregation results while preserving users' privacy. Results show that even when the added noise is large, aggregation accuracy only drops slightly. 
\end{itemize}

In the remaining parts of this paper, we first define the problem in Section \ref{sec:problem}. Then the proposed privacy-preserving truth discovery mechanism is presented in Section \ref{sec:method}. Section \ref{sec:analysis} theoretically analyzes the utility and privacy trade-off, which is also demonstrated through a series of experiments in Section \ref{sec:exp}. We discuss related work in Section \ref{sec:related} and conclude the paper in Section \ref{sec:conclusion}.

\section{Problem Definition}
\label{sec:problem}

In this section, we describe the setting of the proposed privacy preserving task based on crowd sensing system. At the core, it consists of two parties: \emph{server} and \emph{users}. The server is a data collector and computation platform, which is used to collect and aggregate sensory data from a crowd of users. Users represent the participants of the task, who are usually driven by their interests or financial incentives. They receive assigned tasks from the server and submit their sensory data to the server.

We propose to protect \emph{users}' privacy in the sense that users' data are obfuscated before being submitted to the untrusted server. Providing privacy protection for the users who submit data to an untrusted server is essential in crowd sensing system. With an effective privacy protection mechanism, users are more confident and willing to share data, which greatly enhances data collection and enables crowd sensing tasks that would otherwise be infeasible due to privacy concerns.

The security threats in the crowd sensing system mainly come from the unfaithful behavior of the server as it can tamper the confidentiality of users' provided information. The server might try to deduce extra knowledge about users due to curiosity or financial incentives. The users' security concern is to protect their private sensory data from leaking out, while enabling the server to execute aggregation over them. The formal definition is introduced below.

\emph{Problem Definition}. Suppose there are $N$ objects (i.e., micro-tasks) that the server wants to collect information about, and there are $S$ users to provide information about these objects. Let continuous data $x_n^s$ represent the information for the $n$-th object provided by the $s$-th user. Instead of submitting their original data $\{x_n^s\}_{n,s=1}^{N,S}$ to the server, each user perturbs his data and only the perturbed data $\{\hat{x}_n^s\}_{n,s=1}^{N,S}$ are submitted. Our goal is to protect users' privacy by making the probability of observing the same perturbed value given different original values $P(\hat{x}_1=\hat{x}_2|x_1\neq x_2)$ high, while keeping the aggregation on $\{\hat{x}_n^s\}_{n,s=1}^{N,S}$ close enough to the true aggregated values.
Figure \ref{example} illustrates this task setting.

\begin{figure}[tbh]
\centering
\includegraphics[width=0.42\textwidth]{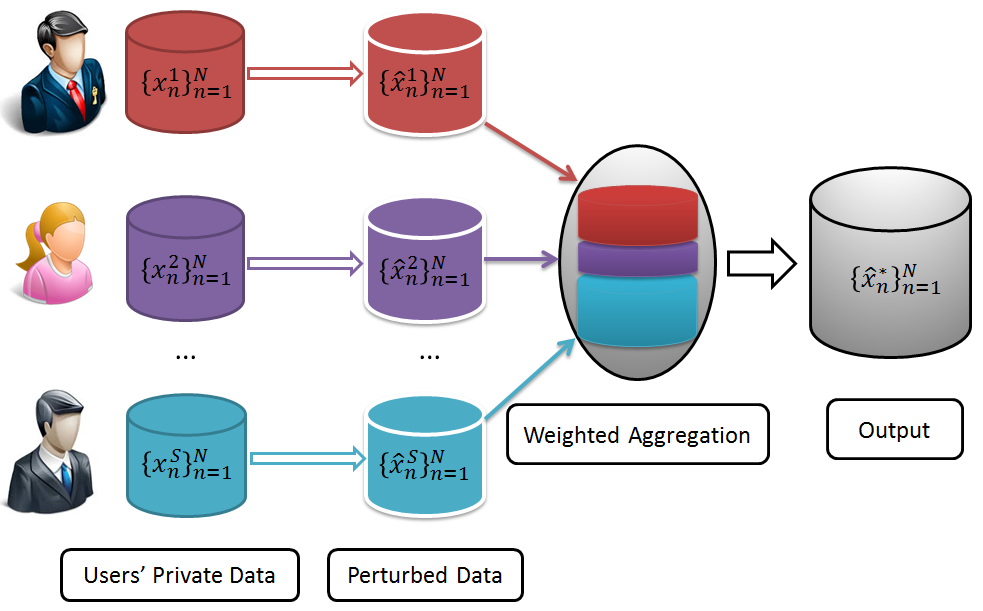}
\caption{Privacy-Preserving Truth Discovery}
\label{example}
\end{figure}

Note that there are certainly other security threats coming from inside or outside of crowd sensing systems. For the other threats, we can leverage and integrate existing techniques to make our model more complete and readily being deployed in real world systems.

\section{Methodology}
\label{sec:method}

In this section, we first introduce the concepts of truth discovery, and then present the proposed privacy-preserving truth discovery mechanism that can guarantee both good utility and strong privacy.

\subsection{Truth Discovery}
\label{sec:preliminary}

In crowd sensing systems, multiple observations are provided by different users on the same set of objects. However, the quality of user-provided information usually varies a lot across users. Therefore, the naive approach that treats all the users equally in aggregation may fail to give reliable aggregated results. Truth discovery \cite{yin_kdd07,luna_survey_vldb12,li2015survey} gains increasing popularity recently as it can infer user weights and conduct weighted aggregation on multiple noisy data sources. Instead of regarding all the users equally, truth discovery approaches estimate users' information quality from the data and relies more on the users who contribute high-quality information to derive aggregated results. 

Although existing truth discovery approaches may differ in the specific ways to compute aggregated results and user weights, we summarize their common procedure as follows. As there are $S$ users providing their information, the goal is to aggregate $\{x_n^s\}_{s=1}^S$ to infer the reliable information about the $n$-th object, $x_n^*$. Note that we assume both input and output are continuous values. The general procedure of truth discovery is summarized in Algorithm \ref{alg:truth_discovery}. Truth discovery starts with an initialization of user weights, and then iteratively conducts aggregation step and weight estimation step until convergence. The convergence criterion can be a threshold for the change of the aggregated results in two consecutive iterations or a predefined iteration number.

\smallskip \noindent \textsf{\textbf{Aggregation}}. In the aggregation step, the user weights are fixed. Then we infer aggregated results as follows:
\begin{equation}
x_n^* = \frac{\sum_{s=1}^{S} w_s\cdot x_n^s}{\sum_{s=1}^{S} w_s},
\label{eq:weighted_mean}
\end{equation}
where $w_s$ is the weight of the $s$-th user. In this weighted aggregation framework, the final result $x_n^*$  relies on those users who have high weights. This follows the principle that the information from reliable users will be counted more.

\smallskip \noindent \textsf{\textbf{Weight Estimation}}. In this step, user weights are inferred based on the current aggregated results. The basic idea is that if a user provides information which is close to the aggregated results, a high weight will be assigned to this user. Typically, user weights are calculated as follows:
\begin{equation}
w_s = f\left(\sum_{n=1}^N d(x_n^s,x_n^{*})\right),
\label{eq:weight_estimation}
\end{equation}
where $d(\cdot)$ is a function that measures the difference between the user-provided information and the aggregated results, and $f$ is a monotonically decreasing function. If the difference is small, then the user gets a high weight. Different truth discovery methods may adopt various functions $d(\cdot)$ and $f$, but the underlying principle is the same.

Here we show the weight estimation of CRH \cite{crh_sigmod14} as an instantiation of Eq.\ (\ref{eq:weight_estimation}):
\begin{equation}
w_s = -\log \left(\frac{\sum_{n=1}^{N} d(x_n^s,x_n^{*})}{\sum_{s'=1}^{S}\sum_{n=1}^{N} d(x_n^{s'},x_n^{*})}\right).
\label{eq:weight_CRH}
\end{equation}
Note that the proposed privacy preserving mechanism is not specifically designed for CRH. It can work with any truth discovery method that can handle continuous data. In Section \ref{sec:exp}, we will demonstrate experimental results that support this claim.

\begin{algorithm}[tbp]
\small
\caption{\bf Truth Discovery}\label{alg:truth_discovery}
\flushleft
{\bf Input:} Information from $S$ users $\{x_n^s\}_{n,s=1}^{N,S}$. \\
{\bf Output:} Aggregated results $\{x_n^{*}\}_{n=1}^{N}$.
\begin{algorithmic}[1]
\STATE Initialize the user weights $\{w_s\}_{s=1}^{S}$;
\REPEAT
\FOR{$i$ $\leftarrow$ $1$ to $N$}
    \STATE According to Eq.\ (\ref{eq:weighted_mean}), update aggregated results based on the current estimation of user weights;
\ENDFOR
\STATE According to Eq.\ (\ref{eq:weight_estimation}), estimate user weights based on the current aggregated results;
\UNTIL{Convergence criterion is satisfied;}
 \RETURN Aggregated results $\{x_n^{*}\}_{n=1}^{N}$.
\end{algorithmic}
\end{algorithm}

\subsection{Proposed Mechanism}
The proposed privacy-preserving truth discovery mechanism consists of the following two components:

First, we propose to add i.i.d. Gaussian noise, $\xi_n^s$, to the original data provided by the $s$-th user on the $n$-th object, $x_n^s$. Let's denote the perturbed information as $\hat{x}_n^s$, then
\begin{equation}
\label{eq:add_noise}
\hat{x}_n^s = {x}_n^s + \xi_n^s,
\end{equation}
where $\xi_n^s \sim N(0,{\delta_s}^2)$. ${\delta_s}^2$ is the variance of the Gaussian noise chosen by the $s$-th user. Intuitively, the added noise is related to the degree of privacy protection. When noise variance is large, the added noise is more likely to be large and more privacy protection is expected. To guarantee aggregation accuracy, we ask each user to sample his own variance from an exponential distribution with hyper parameter $\lambda_2$. Based on this strategy, each user chooses his noise variance independently and then sample independent noise from his private noise distribution. 

After data are perturbed, each user submits his perturbed data $\{\hat{x}_n^s\}_{n=1}^{N}$ to the server. The server aggregates the perturbed data from all the users $\{\hat{x}_n^s\}_{n,s=1}^{N,S}$ by conducting truth discovery to obtain final output for all objects. When aggregating perturbed data by truth discovery, the weight of each user is estimated based on the quality of information after perturbation. By conducting weighted aggregation, the effect of noise will be characterized in the user weights and the final results would not deviate much from the aggregated results without perturbation. This promises good utility of the aggregated results. The whole procedure is illustrated by the following example and summarized in Algorithm \ref{alg:proposed}.

\emph{Example: Consider a user who has high-quality original data about the objects of interest. Following the proposed mechanism, this user will sample his own variance, say a large one, to perturb his original data. The perturbed data is submitted to the server, and aggregated by truth discovery. From the perspective of privacy, any other parties including the server do not know this particular user's original data and his sampled variance, thus the privacy guarantee is provided. From the perspective of utility, the estimated weight of this particular user will be low as the quality of his perturbed data is not good. Thus the aggregated results will not be affected too much, and good utility can be guaranteed.}

\begin{algorithm}[htbp]
	\small
	\caption{\bf Privacy-preserving Truth Discovery}\label{alg:proposed}
	\flushleft
	{\bf Input:} $N$ objects (i.e., micro-tasks), $S$ users \\
	{\bf Output:} Aggregated results $\{\hat{x}_n^{*}\}_{n=1}^N$
	\begin{algorithmic}[1]
		\STATE Server sends micro-tasks to each user;
		\STATE Users finish the micro-tasks, i.e., the $s$-th user prepares his original information $\{x_n^s\}_{n=1}^N$;
		\STATE Each user samples his own parameter ${\delta_s}^2$ from exponential distribution based on the server-released hyper parameter $\lambda_2$;
		\STATE According to Eq.\ (\ref{eq:add_noise}), the $s$-th user perturbs his original information and get the perturbed data $\{\hat{x}_n^s\}_{n=1}^N$;
		\STATE Users submit their perturbed data to the server;
		\STATE Server conducts truth discovery on perturbed data $\{\hat{x}_n^s\}_{n,s=1}^{N,S}$ to calculate aggregated results.
		\RETURN Aggregated results $\{\hat{x}_n^{*}\}_{n=1}^N$.
	\end{algorithmic}
\end{algorithm}

Although the proposed mechanism is simple, it has several nice properties which make it a great choice for user privacy protection in truth discovery:
\begin{itemize}
\item  First, each user chooses his noise variance independently and randomly, so the noise distribution is unknown to any other parties including the server.
\item Second, truth discovery methods which conduct weighted aggregation make it possible to achieve high accuracy even when the added noise is large. This provides better accuracy than traditional aggregation methods, such as mean or median, which do not consider user weights based on information quality.
\item Last but not least, this technique ensures fast processing as each user only needs to generate random noise and add it to his data, and there are no communication costs due to the non-collaborative mechanism. It is easy to implement and use in real practice.
\end{itemize}

\section{Theoretical Analysis}
\label{sec:analysis}

In this section, we analyze the performance of the proposed privacy-preserving truth discovery mechanism from utility and privacy perspectives, quantify their trade-off, and demonstrate that the proposed mechanism can achieve good utility with strong privacy protection theoretically.

We first introduce the notations that will be used in the following analysis. As the proposed privacy-preserving truth discovery mechanism has two components, we denote the perturbation mechanism and the truth discovery algorithm as $\mathcal{M}$ and $\mathcal{A}$ respectively. The original data set is represented as $D=\{x^s_n\}_{n,s=1}^{N,S}$ in which $x^s_n$ is the original value contributed by the $s$-th user on the $n$-th object. The perturbed data set is denoted as $\mathcal{M}(D)=\{\hat{x}^s_n\}_{n,s=1}^{N,S}$ after following $\mathcal{M}$ to perturb $D$.  The outputs of the truth discovery algorithm on original data and perturbed data are denoted as $\{x^*_n\}_{n=1}^{N}=\mathcal{A}(D)$ and $\{\hat{x}^*_n\}_{n=1}^{N}=\mathcal{A}(\mathcal{M}(D))$ in which $x^*_n$ and $\hat{x}^*_n$ denote the aggregated result for the $n$-th object on original data and perturbed data respectively.

We also introduce an important parameter used in the following analysis. This parameter is related to the prior knowledge held by the server regarding noise. As discussed in the previous section, the noise added to the data follows Gaussian distribution $N(0,{\delta_s}^2)$ in which ${\delta_s}^2$ is drawn from an exponential distribution with parameter $\lambda_2$. The server does not know the actual noise distributions, but knows the hyper-parameter $\lambda_2$. In other words, the server knows the distribution for the variance that captures the noise distributions. Formally, we define prior knowledge as follows:
\begin{assumption}[Prior Knowledge]
Prior knowledge is the variance of the noise's distributions, i.e., the p.d.f of the noise's variance is $g(z)=\lambda_2 e^{-\lambda_2 z}$.
\label{assumption1}
\end{assumption}

Recall that in truth discovery, it is observed that the error in the original data (difference between user input and true aggregated results) follows Gaussian distribution $N(0,{\sigma_s}^2)$. If all the users are unreliable (with big ${\sigma_s}^2$), it is hard or even impossible to get useful aggregated results. Thus previous work on truth discovery assumes that most users should have relatively good quality \cite{bo_ltm_vldb12}. Following this, we assume that the error variance ${\sigma_s}^2$ is drawn from exponential distribution with parameter $\lambda_1$, which guarantees that the chance of observing an unreliable user is not very large. Note that parameter $\lambda_1$ is introduced only for theoretical analysis and is not involved in the proposed mechanism (Algorithm \ref{alg:proposed}). In practice, we do not need to estimate $\lambda_1$ for a given application.

Accordingly, the expectation of the error and noise's variances are  $1/\lambda_1$ and $1/\lambda_2$ respectively. Let $1/\lambda_2 = c/\lambda_1$. Then, $c$ stands for the ratio between the expectation of noise's variance and that of the error's variance. A large $c$ may lead to large noise added to users' data, and thus $c$ can be regarded as noise level compared with original data. $c$ is an important parameter. In the following analysis, we link utility and privacy to $c$ respectively and then discuss utility-privacy trade-off.

\subsection{Utility Analysis}

In this section, we present formal definition of utility, and analyze the utility of the proposed mechanism. First, we define the utility as follows.

\begin{definition}[$(\alpha,\beta)$-Utility]
Let $\beta\in[0,1]$ and $\alpha\geq0$. An algorithm $\mathcal{A}$ with perturbation mechanism $\mathcal{M}$ satisfies $(\alpha,\beta)$-Utility, if the following inequality holds:
\begin{align}
\Pr\{|\mathcal{A}(D)-\mathcal{A}(\mathcal{M}(D))|\geq\alpha\}\leq \beta,
\end{align}
where $D$ is an arbitrary data set.
This definition quantifies the probability of the difference in aggregation before and after perturbation. We hope that the chance of this difference is greater than $\alpha$ is smaller than probability $\beta$. Based on this definition, under perturbation mechanism $\mathcal{M}$, the smaller $\alpha$ and $\beta$ are, the better utility an algorithm $\mathcal{A}$ has.
\end{definition}

Let $\mathcal{A}$ be a truth discovery approach, and $\mathcal{M}$ be the perturbation approach in the proposed mechanism. We now quantify the utility of the proposed mechanism according to noise level $c$ and utility parameters $\alpha$ and $\beta$. The proof is derived based on the common property held by truth discovery approaches, i.e., weighted aggregation and weight estimation.

We derive the main result about utility shown in Theorem \ref{utility} when $c\neq 1$. The special case when $c=1$ is shown with similar results in Appendix A. We present the theorem and proof first and then discuss this result in detail.
\begin{theorem}
Consider a truth discovery algorithm $\mathcal{A}$. We apply perturbation mechanism $\mathcal{M}$ stated in Algorithm 2 on a data set $D$ and then apply truth discovery algorithm $\mathcal{A}$. Based on Assumption \ref{assumption1}, for the aggregation output before and after perturbation $\{x^*_n\}_{n=1}^{N}=\mathcal{A}(D)$ and $\{\hat{x}^*_n\}_{n=1}^{N}=\mathcal{A}(\mathcal{M}(D))$, there exist constants $\alpha_{\lambda_1,c}$ and $C_{\lambda_1,\alpha,\beta,S}$, s.t. $\forall \alpha > \alpha_{\lambda_1,c}$, $\beta\in[0,1]$, and $c\leq C_{\lambda_1,\alpha,\beta,S}$, $\mathcal{A}$ satisfies $(\alpha,\beta)$-Utility. Namely, the following inequality holds:
\begin{eqnarray}
\Pr\{\frac{1}{N}\sum_{n=1}^{N}|x^*_n-\hat{x}^*_n| \geq \alpha\}\leq \beta,
\end{eqnarray}
where $C_{\lambda_1,\alpha,\beta,S} = \lambda_1\sqrt{\pi}(\frac{\alpha^2\beta S^2}{4\sqrt{2}}+\frac{\alpha^2\sqrt{\pi}}{8}+\alpha+\frac{2}{\sqrt{\pi}})-2$ and $\alpha_{\lambda,c} = \frac{2\sqrt{2}}{\sqrt{\lambda_1}(1-c)}(\frac{3}{4}-\frac{c(c+\sqrt{c}+1)}{\sqrt{2}(1+\sqrt{c})})$.
\label{utility}
\end{theorem}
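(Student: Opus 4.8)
The plan is to control the averaged deviation $\frac1N\sum_n|x_n^*-\hat x_n^*|$ by first bounding its expectation over both the sampled variances and the Gaussian errors/noise, and then converting that moment bound into the stated tail bound. I would model each user's input as $x_n^s=\tilde x_n+\epsilon_n^s$ with $\epsilon_n^s\sim N(0,\sigma_s^2)$, and the perturbed input as $\hat x_n^s=\tilde x_n+\epsilon_n^s+\xi_n^s$ with $\xi_n^s\sim N(0,\delta_s^2)$, where $\tilde x_n$ is the underlying truth. Writing both aggregates via Eq.~\eqref{eq:weighted_mean} and applying the triangle inequality through $\tilde x_n$, I obtain
\[
\frac1N\sum_{n=1}^N|x_n^*-\hat x_n^*|
\le \frac1N\sum_{n=1}^N\frac{\sum_s w_s\,|\epsilon_n^s|}{\sum_s w_s}
+\frac1N\sum_{n=1}^N\frac{\sum_s \hat w_s\,|\epsilon_n^s+\xi_n^s|}{\sum_s \hat w_s},
\]
so each term is a weight-normalized average of per-user absolute deviations.

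The main obstacle is that the weights $w_s,\hat w_s$ are complicated, data-dependent quantities coupling all users and objects, so they cannot be pulled out of the expectation as constants. This is exactly where the common property of truth discovery enters: since $f$ in Eq.~\eqref{eq:weight_estimation} is monotonically decreasing, a user whose cumulative deviation $\sum_n d(x_n^s,x_n^*)$ is large receives a small weight. Hence the per-user weights and the per-user cumulative absolute deviations are oppositely ordered, and by a rearrangement (Chebyshev sum) argument each weight-normalized average is bounded above by the corresponding unweighted average $\frac1S\sum_s(\cdot)$. This decouples the estimated weights from the randomness and reduces the task to bounding $E\big[\frac1S\sum_s|\epsilon_n^s|\big]$ and $E\big[\frac1S\sum_s|\epsilon_n^s+\xi_n^s|\big]$. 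I expect the delicate point here to be justifying the rearrangement when the ordering is only approximate object-by-object, which I would handle by working with the cumulative per-user deviations on which the weights actually depend.

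These two expectations are then evaluated in closed form. Using $E|N(0,v)|=\sqrt{2v/\pi}$ together with the prior $\sigma_s^2\sim\mathrm{Exp}(\lambda_1)$ from Assumption~\ref{assumption1} produces the first contribution $\propto 1/\sqrt{\lambda_1}$ (the $\tfrac34$-type term). For the perturbed side I need $E[\sqrt{\sigma_s^2+\delta_s^2}]$, which requires the density of the sum of two independent exponentials with parameters $\lambda_1$ and $\lambda_2=\lambda_1/c$; this convolution is precisely what introduces the factor $1/(1-c)$ and, after simplification, the ratio $\tfrac{c+\sqrt c+1}{1+\sqrt c}$ appearing in $\alpha_{\lambda_1,c}$. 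Collecting the two contributions yields a bound on $E\big[\frac1N\sum_n|x_n^*-\hat x_n^*|\big]$ matching $\alpha_{\lambda_1,c}$, which thus plays the role of the mean deviation.

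Finally I would pass from the mean to the tail. Requiring $\alpha>\alpha_{\lambda_1,c}$ forces $\alpha$ above this mean, so I can center the deviation and apply a Chebyshev-type inequality after bounding its variance. Because the aggregate averages $S$ independent users, the variance carries a favorable $1/S^2$-type factor; imposing the resulting constraint (of the form $\mathrm{Var}\le\beta(\alpha-\alpha_{\lambda_1,c})^2$) and solving for the noise level then produces the admissible range $c\le C_{\lambda_1,\alpha,\beta,S}$, in which larger $S$ (more users averaging out noise) and looser accuracy demands (larger $\alpha,\beta$) permit a larger tolerable $c$. I anticipate the second genuine difficulty to be keeping this variance computation tractable in the presence of the data-dependent weights, which I would again manage by leaning on the monotone-weight bound established in the second step rather than evaluating the weighted mean exactly.
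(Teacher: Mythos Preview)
Your proposal uses the same core ingredients as the paper---the monotone-weight inequality (the paper's Lemma~4.4) to pass from weighted to unweighted averages, closed-form Gaussian/exponential moments, and a Chebyshev tail bound---but the initial decomposition is genuinely different. You split through the latent truth $\tilde x_n$ and treat the two weighted sums separately, obtaining single sums $\tfrac1S\sum_s|\epsilon_n^s|$ and $\tfrac1S\sum_s|\epsilon_n^s+\xi_n^s|$. The paper instead puts both weighted means over a common denominator and bounds the resulting double sum by $\tfrac1{S^2}\sum_{s,s'}\tfrac1N\sum_n|x_n^s-\hat x_n^{s'}|$, which after the law of large numbers becomes $\sqrt{2/\pi}\,Y_{s,s'}$ with $Y_{s,s'}=\sqrt{\sigma_s^2+\sigma_{s'}^2+\delta_{s'}^2}$---a square root of a sum of \emph{three} independent exponentials, not one and two as in your route.

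This has two concrete consequences. First, your mean bound does not reproduce the stated $\alpha_{\lambda_1,c}$: your two pieces give $\sqrt{2/\pi}\,E[\sigma_s]$ and $\sqrt{2/\pi}\,E[\sqrt{\sigma_s^2+\delta_s^2}]$, whereas the paper's $\alpha_{\lambda_1,c}=\tfrac{2\sqrt2}{\sqrt\pi}E[Y]$ comes from the three-term $Y$; the $\tfrac34$ and the precise rational in $\sqrt c$ arise from that specific density, so your constants will be analogous but not identical. Second, and more importantly for the form of $C_{\lambda_1,\alpha,\beta,S}$, the paper's double sum has $S^2$ summands, which is what produces the $\tfrac1{S^2}\mathrm{Var}(Y)$ factor in Chebyshev and hence the $S^2$ inside $C$. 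Your single-sum decomposition has only $S$ independent summands per term, so your variance scales like $1/S$ and your admissible-$c$ bound would carry $S$, not $S^2$. Your approach is a valid way to prove an $(\alpha,\beta)$-utility statement of the same qualitative shape, but if the goal is the theorem with its exact constants, you need the paper's cross-term decomposition rather than the triangle-through-truth split.
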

Before we give the proof of Theorem \ref{utility}, we first introduce a lemma that is useful to the proof.
\begin{lemma}
Assume $w_s=f(t_s)$ for all $s\leq S$. Provided $f$ is a monotonically decreasing function,  then we have:
\begin{small}
\begin{eqnarray}
\frac{\sum_{s=1}^{S}w_st_s}{\sum_{s=1}^{S}w_s}\leq\frac{\sum_{s=1}^{S}t_s}{S}.
\label{lemma1eq}
\end{eqnarray}
\end{small}\label{lemma1}
\end{lemma}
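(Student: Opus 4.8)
The plan is to recognize inequality (\ref{lemma1eq}) as a correlation (Chebyshev-type) inequality. The crucial observation is that because $f$ is monotonically decreasing, the two sequences $(t_s)_{s=1}^S$ and $(w_s)_{s=1}^S = (f(t_s))_{s=1}^S$ are \emph{negatively} associated: a large $t_s$ forces a small $w_s$ and vice versa. Intuitively the weighted average with these weights therefore downweights the large $t_s$ and upweights the small ones, so it cannot exceed the plain arithmetic mean. To make this precise I would first assume $\sum_{s=1}^S w_s > 0$ (which holds whenever the weights are positive, as in the CRH instantiation (\ref{eq:weight_CRH})), so that dividing by the denominator is legitimate; under this assumption (\ref{lemma1eq}) is equivalent to the polynomial inequality
\begin{equation}
S\sum_{s=1}^{S} t_s w_s \leq \left(\sum_{s=1}^{S} t_s\right)\left(\sum_{s=1}^{S} w_s\right).
\end{equation}

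The main step is the algebraic identity, obtained by expanding the double sum and collecting the four resulting terms,
\begin{equation}
S\sum_{s=1}^{S} t_s w_s - \left(\sum_{s=1}^{S} t_s\right)\left(\sum_{s=1}^{S} w_s\right) = \frac{1}{2}\sum_{i=1}^{S}\sum_{j=1}^{S}(t_i - t_j)(w_i - w_j).
\end{equation}
I would then analyze the sign of each summand. Writing $w_i - w_j = f(t_i) - f(t_j)$ and using that $f$ is decreasing: whenever $t_i > t_j$ we have $f(t_i) \leq f(t_j)$, so $(t_i - t_j)$ and $(w_i - w_j)$ carry opposite signs and their product is $\leq 0$; the symmetric case $t_i < t_j$ gives the same conclusion, and the product vanishes when $t_i = t_j$. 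Hence every term of the double sum is nonpositive, the right-hand side of the identity is $\leq 0$, and dividing back through by $S\sum_{s=1}^S w_s > 0$ yields (\ref{lemma1eq}).

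I expect no serious obstacle here. The only points requiring care are the positivity condition $\sum_{s=1}^S w_s > 0$ needed to clear the denominator, and the bookkeeping in verifying the double-sum identity. As an alternative to the self-contained identity argument, the entire lemma follows in one line from Chebyshev's sum inequality applied to the oppositely-sorted sequences $(t_s)$ and $(w_s)$; I would nonetheless prefer the pairwise argument above, since it avoids invoking an external result and makes the role of the monotonicity of $f$ completely transparent.
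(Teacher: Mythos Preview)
Your proposal is correct and follows essentially the same route as the paper: both clear the denominator, rewrite the difference $S\sum_s w_s t_s - (\sum_s t_s)(\sum_s w_s)$ as a sum of pairwise products $(t_i-t_j)(w_i-w_j)$ (you with the symmetric $\tfrac12\sum_{i,j}$ form, the paper over unordered pairs), and then invoke the monotonicity of $f$ to conclude each summand is nonpositive. Your version is in fact a bit cleaner, and you rightly flag the implicit assumption $\sum_s w_s>0$ that the paper leaves unstated.
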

For the detailed proof of this lemma, please refer to Appendix B. Now we are ready to prove Theorem \ref{utility}.
\begin{proof}
\begin{align}
&\frac{1}{N}\sum_{n=1}^{N}|x^*_n-\hat{x}^*_n| \notag\\
= & \frac{1}{N}\sum_{n=1}^{N}|\frac{\sum_{s=1}^{S}w_sx^s_n}{\sum_{s=1}^{S}w_s} - \frac{\sum_{s=1}^{S}\hat{w}_s\hat{x}^s_n}{\sum_{s=1}^{S}\hat{w}_s}|\notag\\
=& \frac{1}{N}\sum_{n=1}^{N}|\frac{\sum_{s'\in S}\hat{w}_{s'}\sum_{s=1}^{S}w_sx^s_n-\sum_{s=1}^{S}w_s\sum_{s'\in S}\hat{w}_{s'}\hat{x}^{s'}_n}{\sum_{s=1}^{S}w_s\sum_{s'\in S}\hat{w}_{s'}}| \notag \\
=& \frac{1}{N}\sum_{n=1}^{N}|\frac{\sum_{s=1}^{S}\sum_{s'=1}^{S}\hat{w}_{s'}w_sx^s_n-\sum_{s=1}^{S}\sum_{s'=1}^{S}w_s\hat{w}_{s'}\hat{x}^{s'}_n}{\sum_{s=1}^{S}w_s\sum_{s'=1}^{S}\hat{w}_{s'}}|\notag\\
\leq& \frac{\sum_{s=1}^{S}\sum_{s'=1}^{S}\hat{w}_{s'}w_s(\frac{1}{N}\sum_{n=1}^{N}|x^s_n-\hat{x}^{s'}_n|)}{\sum_{s=1}^{S}\sum_{s'=1}^{S}\hat{w}_{s'}w_s}\notag\\
\leq&\frac{\sum_{s=1}^{S}\sum_{s'=1}^{S}(\frac{1}{N}\sum_{n=1}^{N}|x^s_n-\hat{x}^{s'}_n|)}{S^2}\quad \text{(Lemma \ref{lemma1})}.
\label{ineq1}
\end{align}
Note that $x^s_n-x^{truth}_n \sim N(0,\sigma^2_s)$, $\hat{x}^{s'}_n-x^{truth}_n \sim N(0,\sigma^2_{s'}+\delta^2_{s'})$ where $\sigma^2_s$ is the $s$-th user's error variance, $\sigma^2_{s'}$ and $\delta^2_{s'}$ are the $s'$-th user's error and noise variance, and $x^{truth}_n$ represents the true value of the $n$-th object. Then $x^s_n-\hat{x}^{s'}_n \sim N(0,\sigma^2_s+\sigma^2_{s'}+\delta^2_{s'})$, as $x^s_n-\hat{x}^{s'}_n = x^s_n-x^{truth} + x^{truth} -\hat{x}^{s'}_n$. Denote $\Sigma^2_{s,s'} = \sigma^2_s+\sigma^2_{s'}+\delta^2_{s'}$. We have: 

\begin{align}
\mathbb{E}(|x^s_n - \hat{x}^{s'}_n|) & = \int_{\mathbb{R}}|x|\frac{1}{\sqrt{2\pi}\Sigma_{s,s'}}\exp\{-\frac{x^2}
{2\Sigma^2_{s,s'}}\}dx\notag\\
& =2 \int_{0}^{\infty}|x|\frac{1}{\sqrt{2\pi}\Sigma_{s,s'}}\exp\{-\frac{x^2}{2\Sigma^2_{s,s'}}\}dx\notag\\
& =\sqrt{\frac{2}{\pi}}\sqrt{\sigma^2_s+\sigma^2_{s'}+\delta^2_{s'}}.
\end{align}

Based on this and strong law of large numbers, we have an almost sure convergence estimator: 
\begin{align}
\frac{1}{N}\sum_{n=1}^{N}|x^s_n-\hat{x}^{s'}_n| = \sqrt{\frac{2}{\pi}}\sqrt{\sigma^2_s+\sigma^2_{s'}+\delta^2_{s'}}.
\end{align}
By substituting it into Eq.\ (\ref{ineq1}), we have:
\begin{align}
&\frac{1}{N}\sum_{n=1}^{N}|x^*_n-\hat{x}^*_n|
\leq \sqrt{\frac{2}{\pi}}\frac{1}{S^2}\sum_{s=1}^{S}\sum_{s'=1}^{S}\sqrt{\sigma^2_s+\sigma^2_{s'}+\delta^2_{s'}}.
\label{new_start}
\end{align}

Denote $Y_{s,s'} = \sqrt{\sigma^2_s+\sigma^2_{s'}+\delta^2_{s'}}$, and $Y_{s,s'}$ is i.i.d. To simplify the notation, we use $Y$ to denote $Y_{s,s'}$ in the following. Accordingly, the p.d.f.\ of $Y$ is $h(y) = 2\frac{\lambda^2_1\lambda_2}{\lambda_2-\lambda_1}y^3e^{-\lambda_1y^2}-2\frac{\lambda^2_1\lambda_2}
{(\lambda_2-\lambda_1)^2}(ye^{-\lambda_1y^2}-ye^{-\lambda_2y^2})$. Thus, $\mathbb{E}(Y)= \sqrt{\pi}(\frac{3\lambda_2}{4\sqrt{\lambda_1}(\lambda_2-\lambda_1)}+\frac{\lambda_1^2-\lambda_2\sqrt{\lambda_1\lambda_2}}
{\sqrt{2\lambda_2}(\lambda_2-\lambda_1)^2})$, and $\mathbb{E}(Y^2) = \frac{2\lambda_2+\lambda_1}{\lambda_1\lambda_2}$.
Based on Eq.\ (\ref{new_start}), we have:
\begin{align}
&\frac{1}{N}\sum_{n=1}^{N}|x^*_n-\hat{x}^*_n|\notag \notag\leq \sqrt{\frac{2}{\pi}}\frac{1}{S^2}\sum_{s,s'\leq S}\sqrt{\sigma^2_s+\sigma^2_{s'}+\delta^2_{s'}}\\
&\leq \sqrt{\frac{2}{\pi}}|\frac{1}{S^2}\sum_{s,s'\leq S}\sqrt{\sigma^2_s+\sigma^2_{s'}+\delta^2_{s'}}\!\! -\!\! \mathbb{E}(Y)|+ \sqrt{\frac{2}{\pi}}\mathbb{E}(Y).
\label{ineq2}
\end{align}
Based on Eq.\ (\ref{ineq2}), we have:
\begin{align}
&\Pr\{\frac{1}{N}\sum_{n=1}^{N}|x^*_n-\hat{x}^*_n| \geq \alpha\}\notag\\
\leq& \Pr\{\sqrt{\frac{2}{\pi}}|\frac{1}{S^2}\sum_{s\leq S}\sum_{s'\leq S}\sqrt{\sigma^2_s+\sigma^2_{s'}+\delta^2_{s'}} - \mathbb{E}(Y)|\geq\frac{\alpha}{2}\}\notag\\
&+\Pr\{\sqrt{\frac{2}{\pi}}\mathbb{E}(Y)\geq\frac{\alpha}{2}\}\quad\quad \text{(Chebyshev's inequality)}\notag \\
\leq&\sqrt{\frac{2}{\pi}}\frac{\text{Var}(\frac{1}{S^2}\sum_{s}\sum_{s'}Y_{s,s'})}{(\alpha/2)^2}+ \Pr\{\sqrt{\frac{2}{\pi}}\mathbb{E}(Y)\geq\frac{\alpha}{2}\} \notag\\
=&4\sqrt{\frac{2}{\pi}}\frac{\frac{1}{S^2}\text{Var}(Y)}{(\alpha/2)^2} + \Pr\{\sqrt{\frac{2}{\pi}}\mathbb{E}(Y)\geq\frac{\alpha}{2}\} \leq \beta.
\label{assp1}
\end{align}
Once the exponential distributions are given, the probability in Eq.\ (\ref{assp1}) is either $0$ or $1$. Note that the smaller $\beta$, the better utility we can obtain. We can achieve $\Pr\{\sqrt{\frac{2}{\pi}}\mathbb{E}(Y)\geq\frac{\alpha}{2}\}=0$ by assuming that $\alpha > \frac{2\sqrt{2}}{\sqrt{\pi}}\mathbb{E}(Y)$.
Thus, Eq.\ (\ref{assp1}) can be reduced to $\text{Var}(Y) \leq \frac{\sqrt{\pi}\alpha^2\beta S^2}{4\sqrt{2}}$.
Moreover,
\begin{eqnarray}
\mathbb{E}(Y^2) \leq \frac{\sqrt{\pi}\alpha^2\beta S^2}{4\sqrt{2}} + (\mathbb{E}(Y))^2\leq \frac{\sqrt{\pi}\alpha^2\beta S}{4\sqrt{2}} + (\frac{\alpha\sqrt{\pi}}{2\sqrt{2}} + \sqrt{2})^2. \notag
\end{eqnarray}
Since $\mathbb{E}(Y^2) = \frac{2\lambda_2+\lambda_1}{\lambda_1\lambda_2}$, we have:
\begin{eqnarray}
\frac{2\lambda_2+\lambda_1}{\lambda_1\lambda_2}&\leq& \frac{\sqrt{\pi}\alpha^2\beta S^2}{4\sqrt{2}} + (\frac{\alpha\sqrt{\pi}}{2\sqrt{2}} + \sqrt{2})^2.
\end{eqnarray}
By substituting $\frac{1}{\lambda_2} = c\frac{1}{\lambda_1}$, we can obtain an upper bound for $c$ to obtain $(\alpha,\beta)$-utility:
\begin{eqnarray}
c\leq \lambda_1\sqrt{\pi}(\frac{\alpha^2\beta S^2}{4\sqrt{2}}+\frac{\alpha^2\sqrt{\pi}}{8}+\alpha+\frac{2}{\sqrt{\pi}})-2 \triangleq C_{\lambda_1,\alpha,\beta,S}.
\label{upperbound}
\end{eqnarray}
As $\sqrt{\frac{2}{\pi}}\mathbb{E}(Y)<\frac{\alpha}{2}$ , we can also obtain a lower bound for $\alpha$, namely $\alpha_{\lambda,c} = \frac{2\sqrt{2}}{\sqrt{\lambda_1}(1-c)}(\frac{3}{4}-\frac{c(c+\sqrt{c}+1)}{\sqrt{2}(1+\sqrt{c})})$, which completes the proof of our theorem.
\end{proof}
This theorem reveals the relationship between the noise and the utility for the proposed mechanism. The upper bound of $c$ specifies the noise level that the proposed mechanism can afford to achieve $(\alpha,\beta)$-utility. From the equation that defines the upper bound of $c$, we can observe the following: (1) When $\alpha$ and $\beta$ become smaller or larger, the upper bound of $c$ decreases or increases, which indicates that better utility requires smaller noise and vice versa. (2) The upper bound of $c$ increases with the increase in the number of users $S$. This means that we can tolerate more noise when more users contribute their information to the aggregation tasks. (3) As $\lambda_1$ captures error distributions in original data, a larger $\lambda_1$ indicates better information quality and correspondingly the mechanism can tolerate more noise.

\subsection{Privacy Analysis}
In this section, we analyze how noise level $c$ is related to user privacy. The traditional differential privacy definition provides the protection of user privacy against information leakage through statistical query results, in which a trusted server is assumed and thus it does not fit the privacy-preserving truth discovery scenario. Recently, local differential privacy \cite{duchi2013local,dwork2013algorithmic,kairouz2014extremal} is proposed to deal with the scenario where individual users do not trust the server. Based on local differential privacy, we adopt the following privacy definition to quantify the user privacy:

\begin{definition}[($\epsilon$,$\delta$)-Local Differential Privacy]
We say a mechanism $\mathcal{M}$ satisfies ($\epsilon$,$\delta$)-Local Differential Privacy, if for any subset $\mathbb{S}\subseteq\mathbb{R}$ and two different records $x^1$ and $x^2$, the following inequality holds:
\begin{align}
\Pr\{\mathcal{M}(x^1) \in \mathbb{S}\}\leq e^{\epsilon} \Pr\{\mathcal{M}(x^2) \in \mathbb{S}\} + \delta.
\end{align}
\label{our_privacy_definition}
\end{definition}

This definition compares the probability of observing the perturbed value of two different records $x^1$ and $x^2$ in the same range. With two distinguishable pieces of information $x^1$ and $x^2$, an ideal perturbation mechanism should perturb them to indistinguishable values to preserve user privacy in crowdsourced data collection. As can be seen, this definition is stronger than traditional differential privacy which compares the probability of observing similar query outputs on two different databases with one record difference.

Next, we define sensitive information for each user and derive its relationship to the hyper-parameter $\lambda_1$ which controls the error variance distribution.
\begin{definition}[Sensitive Information]
The sensitive information of the $s$-th user is denoted by 
\begin{eqnarray}
\Delta_s = \max_{x^1_s,x^2_s\in D}|x^1_s - x^2_s|,
\end{eqnarray}
where $x^1_s$ and $x^2_s$ are two entries claimed by the $s$-th user about the same object.
\label{si}
\end{definition}
The sensitive information, $\Delta_s$, measures the range of information claimed by the $s$-th user. Intuitively, $\Delta_s$ is related to $\lambda_1$, as $\lambda_1$ controls the variance of users' error and large variance (small $\lambda_1$) leads to large range of values $\Delta_s$. The following lemma formally defines their relationship.
\begin{lemma}
The p.d.f. of the errors' variance is $f(z)=\lambda_1 e^{-\lambda_1 z}$. The sensitive information about the $s$-th user, $\Delta_s$, satisfies that $\Delta_s =|x^1_s-x^2_s| \leq\frac{\gamma_s}{\lambda_1}$ with probability at least $\eta(1-\frac{2e^{-b^2/2}}{b})$, where $\gamma_s=b\sqrt{2\ln{\frac{1}{1-\eta}}}$, $\eta$ and $b$ are constants.
\label{delta}
\end{lemma}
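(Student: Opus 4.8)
The plan is to decompose the sensitive information into the two independent sources of randomness that generate a claimed value: the Gaussian measurement error conditional on a fixed variance, and the exponential draw of that variance itself. Writing $x^1_s - x^2_s = (x^1_s - x^{truth}) - (x^2_s - x^{truth})$ and recalling from the utility analysis that each user error about an object is $N(0,\sigma_s^2)$, the difference of two independent such errors about the same object is $N(0, 2\sigma_s^2)$, with conditional standard deviation $\sqrt{2}\,\sigma_s$. The target bound $\frac{\gamma_s}{\lambda_1}$ should then emerge by first controlling the Gaussian deviation for a fixed $\sigma_s$ and afterwards controlling $\sigma_s$ through its exponential prior $f(z)=\lambda_1 e^{-\lambda_1 z}$.

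First I would condition on the error variance $\sigma_s^2$ and apply a Gaussian tail (Mills-ratio) bound. For a standard normal $Z$ and any $b>0$ one has $\Pr\{|Z|>b\}\le \frac{2}{b\sqrt{2\pi}}e^{-b^2/2}\le \frac{2e^{-b^2/2}}{b}$, so, standardizing $x^1_s - x^2_s$ by $\sqrt{2}\,\sigma_s$, we obtain $\Pr\{|x^1_s - x^2_s|\le \sqrt{2}\,\sigma_s\,b \mid \sigma_s\}\ge 1-\frac{2e^{-b^2/2}}{b}$. This produces exactly the conditional-probability factor appearing in the statement, and crucially the factor is uniform in $\sigma_s$ because the threshold is expressed in standardized units.

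Second I would bound the variance itself through its exponential prior. Since $\Pr\{\sigma_s^2\le z\}=1-e^{-\lambda_1 z}$, inverting at level $\eta$ gives $\Pr\{\sigma_s^2\le \frac{1}{\lambda_1}\ln\frac{1}{1-\eta}\}=\eta$, and hence $\sigma_s \le \frac{1}{\sqrt{\lambda_1}}\sqrt{\ln\frac{1}{1-\eta}}$ holds with probability $\eta$. On this variance event, the conditional threshold satisfies $\sqrt{2}\,\sigma_s\,b \le \frac{b\sqrt{2\ln\frac{1}{1-\eta}}}{\sqrt{\lambda_1}}$, so the event $\{|x^1_s-x^2_s|\le \sqrt{2}\,\sigma_s\,b\}$ is contained in $\{|x^1_s-x^2_s|\le \gamma_s/\sqrt{\lambda_1}\}$ with $\gamma_s = b\sqrt{2\ln\frac{1}{1-\eta}}$ exactly as claimed; the $\sqrt{2}$ in $\gamma_s$ is precisely the contribution of taking a difference of two errors.

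Finally, because the realized Gaussian error and the draw of $\sigma_s^2$ are independent, I would combine the two estimates by conditioning rather than by naively intersecting two unconditional events. Using the tower property, $\Pr\{|x^1_s-x^2_s|\le \gamma_s/\sqrt{\lambda_1}\} \ge \mathbb{E}\big[\mathbbm{1}\{\sigma_s^2\le \tfrac{1}{\lambda_1}\ln\tfrac{1}{1-\eta}\}\,\Pr\{|x^1_s-x^2_s|\le \sqrt{2}\,\sigma_s\,b\mid\sigma_s\}\big]$, and since the inner conditional probability is bounded below by $1-\frac{2e^{-b^2/2}}{b}$ uniformly in $\sigma_s$, it factors out of the expectation to leave $\eta\big(1-\frac{2e^{-b^2/2}}{b}\big)$. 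The main obstacle I anticipate is precisely this last bookkeeping: the Gaussian concentration is only a conditional statement, so I must integrate it against the exponential prior instead of treating the two bounds as simultaneously unconditional, and I must track the exact power of $\lambda_1$ carefully — my derivation naturally yields $\gamma_s/\sqrt{\lambda_1}$, so reconciling this with the stated $\gamma_s/\lambda_1$ requires verifying the square-root step that passes from the variance quantile to a bound on $\sigma_s$.
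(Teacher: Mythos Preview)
Your approach is essentially identical to the paper's: condition on $\sigma_s$, apply a Gaussian tail bound to $x_s^1-x_s^2\sim N(0,2\sigma_s^2)$ to get the factor $1-\tfrac{2e^{-b^2/2}}{b}$, separately bound $\sigma_s$ through the exponential prior on $\sigma_s^2$ to get the factor $\eta$, and combine the two by conditioning. The only substantive point on which you hesitate---that your argument naturally yields the threshold $\gamma_s/\sqrt{\lambda_1}$ rather than the stated $\gamma_s/\lambda_1$---is handled in the paper by an additional hypothesis you did not have access to: the paper invokes ``most users are reliable'' to assume $\lambda_1>1$ and then passes from $M=\sqrt{\ln\frac{1}{1-\eta}}/\sqrt{\lambda_1}$ to $M\le \sqrt{\ln\frac{1}{1-\eta}}/\lambda_1$. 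Your caution here is well placed: for $\lambda_1>1$ one has $\sqrt{\lambda_1}<\lambda_1$ and hence $1/\sqrt{\lambda_1}>1/\lambda_1$, so that inequality actually goes the wrong way; the bound $\gamma_s/\sqrt{\lambda_1}$ that you derive is the correct outcome of the argument, and the replacement by $\gamma_s/\lambda_1$ would only be valid under $\lambda_1\le 1$.
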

\begin{proof}
	As the $s$-th user's error follows $N(0,\sigma_s^2)$, the $s$-th user's information $x_s\sim N(x^{truth},\sigma_s^2)$ where $x^{truth}$ is the true value and $\sigma_s^2$ is the variance drawn from the exponential distribution with $\lambda_1$. Based on the property of light tail of exponential distribution, given a sufficient large number $M$, $\Pr\{\sigma\leq M\}=1-e^{-\lambda_1M^2} = \eta$, which implies $M=\frac{\sqrt{\ln{\frac{1}{1-\eta}}}}{\sqrt{\lambda_1}}$. As $\lambda_1$ becomes bigger, $M$ could be smaller; vice versa. Based on the assumption that most of the users are reliable, $\lambda_1$ should be larger than $1$. Consequently, $M\leq\frac{\sqrt{\ln{\frac{1}{1-\eta}}}}{\lambda_1}$.
	
	Now, we try to bound $\Delta_s$. Let $x^1_s$ and $x^2_s$ be two pieces of information claimed by the $s$-th user about the same object. Thus $x^1_s-x^2_s\sim N(0,2\sigma^2)$. Based on Gaussian Tail Inequality, we have $
	\Pr\{|x^1_s-x^2_s|>b\sqrt{2}\sigma\}\leq \frac{2e^{-b^2/2}}{b}$,
	which implies $\Delta_s =|x^1_s-x^2_s|\leq b\sqrt{2}\sigma$ with probability at least $1-\frac{2e^{-b^2/2}}{b}$. Let $\gamma_s=b\sqrt{2\ln{\frac{1}{1-\eta}}}$. Then we have $\Delta_s =|x^1-x^2|\leq b\sqrt{2}\sigma\leq\frac{b\sqrt{2\ln{\frac{1}{1-\eta}}}}{\lambda_1}=\frac{\gamma_s}{\lambda_1}$ with probability at least $\eta(1-\frac{2e^{-b^2/2}}{b})$.
\end{proof}

From this lemma, we can see that the sensitive information of each user is inversely proportional to $\lambda_1$, which measures the quality of users. The bigger $\lambda_1$ is, the smaller the error's variance is and the smaller sensitive information of the user. In the following discussion, we choose that $\Delta_s =|x^1_s-x^2_s|=\frac{b\sqrt{2\ln{\frac{1}{1-\eta}}}}{\lambda_1}$.

Next, we prove the main result of privacy analysis which links noise level to local differential privacy under the proposed mechanism.
\begin{theorem}
Consider a perturbation mechanism $\mathcal{M}$ with parameter $\lambda_2$, where $1/\lambda_2=c/\lambda_1$. Based on Definition \ref{si} , $\mathcal{M}$ satisfies ($\epsilon$,$\delta$)-Local Differential Privacy in terms of the $s$-th user, provided $c\geq\frac{\gamma_s^2}{2\lambda_1\epsilon\ln(\frac{1}{1-\delta})}$, where $\gamma_s=b\sqrt{2\ln{\frac{1}{1-\eta}}}$.\label{privacy}
\end{theorem}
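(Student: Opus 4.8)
The plan is to verify the inequality of Definition \ref{our_privacy_definition} directly for the mixture mechanism $\mathcal{M}$, whose randomness has two layers: the variance $\delta_s^2$ drawn from the exponential prior with density $g(z)=\lambda_2 e^{-\lambda_2 z}$ (Assumption \ref{assumption1}), and the Gaussian perturbation $N(0,\delta_s^2)$ added on top. First I would reduce the claim to a worst-case two-point statement. Since the definition quantifies over arbitrary records, and the per-record privacy loss grows with the gap $|x^1-x^2|$, it suffices to treat the extremal pair with $|x^1-x^2|=\Delta_s$; by Lemma \ref{delta} I may then substitute $\Delta_s=\gamma_s/\lambda_1$ at the end. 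For any $\mathbb{S}\subseteq\mathbb{R}$ I would write the mixture probability as $\Pr\{\mathcal{M}(x^1)\in\mathbb{S}\}=\int_0^\infty \lambda_2 e^{-\lambda_2 z}\,\Pr\{N(x^1,z)\in\mathbb{S}\}\,dz$, which turns the argument into a statement about a weighted average of fixed-variance Gaussian mechanisms.

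The core step is a threshold split of this integral at a level $z_0$. The small-variance piece $\int_0^{z_0}$ is bounded crudely by its exponential mass $\Pr\{\delta_s^2<z_0\}=1-e^{-\lambda_2 z_0}$; setting this equal to $\delta$ pins down $z_0=\frac{1}{\lambda_2}\ln\frac{1}{1-\delta}$ and supplies exactly the additive $\delta$ slack. On the large-variance piece $z\ge z_0$ I would control the Gaussian privacy loss: for a fixed $z$ the log density ratio between $N(x^1,z)$ and $N(x^2,z)$ has expectation (the KL divergence) $\frac{(x^1-x^2)^2}{2z}\le\frac{\Delta_s^2}{2z_0}$, so imposing $\frac{\Delta_s^2}{2z_0}\le\epsilon$ keeps the multiplicative factor between the two large-variance integrals at most $e^{\epsilon}$. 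Combining the two pieces yields the $(\epsilon,\delta)$ inequality, and back-substituting $z_0=\frac{1}{\lambda_2}\ln\frac{1}{1-\delta}$, $\Delta_s=\gamma_s/\lambda_1$ and $1/\lambda_2=c/\lambda_1$ into $\frac{\Delta_s^2}{2z_0}\le\epsilon$ rearranges precisely to the stated threshold $c\ge\frac{\gamma_s^2}{2\lambda_1\epsilon\ln(1/(1-\delta))}$.

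The hard part will be the large-variance step. Because a Gaussian density has unbounded tails, the pointwise ratio $p_{N(x^1,z)}(y)/p_{N(x^2,z)}(y)$ is not uniformly bounded in $y$, so one cannot literally assert pure $\epsilon$-differential privacy conditional on $z\ge z_0$; the delicate point is to argue that, once $z_0$ forces the expected privacy loss down to $\epsilon$, the residual tail discrepancy can be folded into the $\delta$ budget already charged to the small-variance event — or, equivalently, absorbed by restricting to the high-probability region coming from a Gaussian tail inequality, exactly the device used for $\Delta_s$ in Lemma \ref{delta}. Keeping this accounting tight is what makes the final condition depend on $\epsilon$ to first order with the $\ln\frac{1}{1-\delta}$ factor, rather than degrading to the $\epsilon^{2}$ and $\ln(1.25/\delta)$ dependence that the off-the-shelf Gaussian-mechanism bound would produce; I would therefore spend most of the effort justifying this per-variance bound and its interaction with the threshold $z_0$.
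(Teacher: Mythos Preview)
Your architecture is exactly the paper's: write the mechanism as a mixture over the exponentially distributed variance, split at a variance threshold, charge the small-variance mass to the additive $\delta$, and control the likelihood ratio on the large-variance event to get the multiplicative $e^{\epsilon}$. Your threshold $z_0=\tfrac{1}{\lambda_2}\ln\tfrac{1}{1-\delta}$ and the back-substitution to obtain $c\ge\gamma_s^2/(2\lambda_1\epsilon\ln\tfrac{1}{1-\delta})$ reproduce the paper's algebra verbatim, just parameterised from the other side (you fix $z_0$ by the $\delta$ budget, the paper fixes the threshold $\Delta_s^2/(2\epsilon)$ by the $\epsilon$ budget; the two coincide under the final inequality).

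Where you diverge from the paper is precisely the point you flag as ``the hard part.'' The paper does \emph{not} work around the unbounded Gaussian privacy loss: it asserts the pointwise density-ratio bound
\[
\frac{1}{\sqrt{2\pi y}}\exp\!\Big(-\tfrac{(x-x^1)^2}{2y}\Big)\;\le\;\exp\!\Big(\tfrac{(x^2-x^1)^2}{2y}\Big)\cdot\frac{1}{\sqrt{2\pi y}}\exp\!\Big(-\tfrac{(x-x^2)^2}{2y}\Big)
\]
for all $x$, and from it concludes pure $\epsilon$-indistinguishability on the event $\{y\ge \Delta_s^2/(2\epsilon)\}$. Your suspicion is warranted: that inequality is equivalent to $(x-x^2)^2\le (x-x^1)^2+(x^1-x^2)^2$, which fails whenever $(x-x^1)(x^1-x^2)>0$, reflecting the fact that the Gaussian log-likelihood ratio is linear in $x$ and hence unbounded. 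So the paper's proof has exactly the gap you anticipated; it does not ``fold the tail discrepancy into $\delta$'' at all.

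Your proposed repair---absorb the Gaussian tail into the $\delta$ already spent on the small-variance event, or invoke the standard Gaussian-mechanism tail argument---is the right instinct, but be aware that carrying it out rigorously changes the dependence: the usual analysis requires $y\gtrsim \Delta_s^2\ln(1/\delta)/\epsilon^2$ (not $\Delta_s^2/(2\epsilon)$) to achieve an $e^\epsilon$ ratio with $\delta$ slack. Hence the first-order-in-$\epsilon$ bound stated in the theorem is unlikely to survive a clean proof; you should expect the corrected condition on $c$ to pick up the $\epsilon^{-2}$ and $\ln(1.25/\delta)$ factors you mention, and you should say so rather than try to recover the paper's exact constant.
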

\begin{proof}
Based on the mechanism, the $s$-th user draws his noise variance from an exponential distribution with parameter $\lambda_2$. Assume that the noise variance is $y$, we have:
\begin{align}
&\Pr\{\mathcal{M}(x^1)=x\} = \frac{1}{\sqrt{2\pi y}} \exp(-\frac{(x-x^1)^2}{2y})\notag\\
\leq & \frac{1}{\sqrt{2\pi y}} \exp(-\frac{(x-x^2)^2-(x^2-x^1)^2}{2y})\notag\\
= & \exp(\frac{(x^2-x^1)^2}{2y}) \frac{1}{\sqrt{2\pi y}} \exp(-\frac{(x-x^2)^2}{2y})\notag\\
\leq & \exp(\frac{\Delta_s^2}{2y}) \Pr\{\mathcal{M}(x^2)=x\} \leq  e^{\epsilon} \Pr\{\mathcal{M}(x^2)=x\}.
\label{approach1}
\end{align}
Obviously, $\Pr\{\mathcal{M}(x^1)=x\}\leq e^{\epsilon}\Pr\{\mathcal{M}(x^2)=x\}$, if and only if $y\geq \frac{\Delta_s^2}{2\epsilon}$. Since $y$ follows exponential distribution with parameter $\lambda_2$, we constrain that the event $\{y:y\geq\frac{\Delta_s^2}{2\epsilon}\}$ happens with at least $1 - \delta$ probability. Namely, $\Pr(\{y:y\geq\frac{\Delta_s^2}{2\epsilon}\}) \geq 1 - \delta$, where $\delta\in[0,1]$. Thus,
$\Pr(\{y:y\geq\frac{\Delta_s^2}{2\epsilon}\})
=\exp(-\frac{\lambda_2\Delta_s^2}{2\epsilon})\geq 1-\delta$.
Then, $\frac{\lambda_2\Delta_s^2}{2\epsilon}\leq\ln(\frac{1}{1-\delta})$. Since $\frac{1}{\lambda_2}=c\frac{1}{\lambda_1}$, we have $c\geq\frac{\lambda_1\Delta_s^2}{2\epsilon\ln(\frac{1}{1-\delta})}$. Based on Lemma \ref{delta}, we have $c\geq\frac{\gamma_s^2}{2\lambda_1\epsilon\ln(\frac{1}{1-\delta})}$, where $\gamma_s=b\sqrt{2\ln{\frac{1}{1-\eta}}}$.

Note that the domain of noise variance is $\mathbb{R}^{+}$. Let us divide $\mathbb{R}^{+}$ as $\mathbb{R}^{+}=R_1\cup R_2$, where
$R_1=\{\rho^2\in\mathbb{R}^{+}:\rho^2\geq\frac{\Delta_s^2}{2\epsilon}\}$ and $R_2=\{\rho^2\in\mathbb{R}^{+}:\rho^2<\frac{\Delta_s^2}{2\epsilon}\}$. Denote $\mathcal{M}(x,\rho^2)$ as the mechanism $\mathcal{M}$ adding noise $N(0,\rho^2)$ to the record $x$. Let $\mathbb{S}\subseteq\mathbb{R}$ be given. We adopt the idea used in Gaussian mechanism \cite{dwork2013algorithmic} to build two different subsets of $\mathbb{S}$, i.e., $\mathbb{S}_1$ and $\mathbb{S}_2$, in the following way. For a specific output $\mathcal{M}(x,\rho^2)\in \mathbb{S}$, we claim $\mathcal{M}(x,\rho^2) \in \mathbb{S}_1$ if $\rho^2\in R_1$ or $\mathcal{M}(x,\rho^2)\in \mathbb{S}_2$ if $\rho^2\in R_2$. Therefore, the probability of event $\mathcal{M}(x,\rho^2)$ belonging to $\mathbb{S}_1$ equals to that of event $\rho^2$ belonging to $R_1$. Similar relation holds between the event $\mathcal{M}(x,\rho^2)\in \mathbb{S}_2$ and the event $\rho^2\in R_2$.

Thus, we have
\begin{align}
\Pr_{\rho^2\in\mathbb{R}^{+}}\{\mathcal{M}(x^1,\rho^2) \in \mathbb{S}\} &\leq (\Pr_{\rho^2\in R_1} + \Pr_{\rho^2\in R_2})\{\mathcal{M}(x^1,\rho^2) \in \mathbb{S}_2\}\notag\\
&\leq \Pr_{\rho^2\in R_1}\{\mathcal{M}(x^1,\rho^2) \in \mathbb{S}_1\} + \delta \notag\\
&\leq e^{\epsilon}(\Pr_{\rho^2\in\mathbb{R}^{+}}\{\mathcal{M}(x^2,\rho^2) \in \mathbb{S}\})+\delta \notag
\end{align}
yielding $(\epsilon,\delta)$-local differential privacy for the proposed perturbation mechanism $\mathcal{M}$.
\end{proof}

From Theorem \ref{privacy}, we can conclude that to achieve stronger privacy, namely for smaller $\epsilon$, the noise level $c$ has to be greater than a certain threshold. This is consistent with intuitions that more noise leads to stronger privacy protection. The lower bound of $c$ is related to $\lambda_1$ and privacy parameters $\epsilon$ and $\delta$.  Smaller $\epsilon$ and $\delta$ (stronger privacy protection) ask for a bigger bound for the noise level.  The bigger $\lambda_1$, the smaller the variance of users' error, and thus less noise is required to guarantee privacy. Since the mechanism of generating perturbed data is the same across users, Theorem \ref{privacy} is applicable to each individual user.

\subsection{Utility-Privacy Trade-off}

Based on Theorem \ref{utility} (utility) and Theorem \ref{privacy} (privacy), we can now analyze the trade-off between utility and privacy as shown in the following theorem:

\begin{theorem}(Utility Privacy Trade-off)
Consider a truth discovery algorithm $\mathcal{A}$ with perturbation mechanism $\mathcal{M}$ and an input data set $D$. Based on Assumption \ref{assumption1} and Definition \ref{si}, $\forall \alpha > \frac{2\sqrt{2}}{\sqrt{\lambda_1}(1-c)}(\frac{3}{4}-\frac{c(c+\sqrt{c}+1)}{\sqrt{2}(1+\sqrt{c})})$,
the algorithm $\mathcal{A}$ with perturbation mechanism $\mathcal{M}$ satisfies $(\alpha,\beta)$-Utility and $(\epsilon,\delta)$-Local Differential Privacy, provided $c\leq  \lambda_1\sqrt{\pi}(\frac{\alpha^2\beta S^2}{4\sqrt{2}}+\frac{\alpha^2\sqrt{\pi}}{8}+\alpha+\frac{2}{\sqrt{\pi}})-2$ and $c\geq\frac{\gamma_s^2}{2\lambda_1\epsilon\ln(\frac{1}{1-\delta})}$, where $\gamma_s=b\sqrt{2\ln{\frac{1}{1-\eta}}}$.
\label{tradeoff_theorem}
\end{theorem}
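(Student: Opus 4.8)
The plan is to treat this statement as an essentially immediate consequence of the two preceding results, Theorem \ref{utility} and Theorem \ref{privacy}, since the hypotheses have been arranged so that each of the two conclusions matches the hypotheses of one already-established theorem. First I would observe that the assumed lower bound $\alpha > \frac{2\sqrt{2}}{\sqrt{\lambda_1}(1-c)}(\frac{3}{4}-\frac{c(c+\sqrt{c}+1)}{\sqrt{2}(1+\sqrt{c})})$ is exactly the condition $\alpha > \alpha_{\lambda_1,c}$, and the assumed upper bound $c \leq \lambda_1\sqrt{\pi}(\frac{\alpha^2\beta S^2}{4\sqrt{2}}+\frac{\alpha^2\sqrt{\pi}}{8}+\alpha+\frac{2}{\sqrt{\pi}})-2$ is exactly the condition $c \leq C_{\lambda_1,\alpha,\beta,S}$. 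Invoking Theorem \ref{utility} under Assumption \ref{assumption1} then directly yields the $(\alpha,\beta)$-Utility half of the claim, with no further computation required.

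For the privacy half, I would note that the remaining hypothesis $c \geq \frac{\gamma_s^2}{2\lambda_1\epsilon\ln(\frac{1}{1-\delta})}$ with $\gamma_s = b\sqrt{2\ln\frac{1}{1-\eta}}$ is precisely the condition under which Theorem \ref{privacy} guarantees that the perturbation mechanism $\mathcal{M}$ (with $1/\lambda_2 = c/\lambda_1$) satisfies $(\epsilon,\delta)$-Local Differential Privacy for the $s$-th user, via Definition \ref{si} and Lemma \ref{delta}. Since the noise-generation mechanism is applied independently and identically across users, this per-user guarantee transfers uniformly. Consequently both conclusions hold simultaneously exactly when the single noise-level parameter $c$ lies in the interval $[\frac{\gamma_s^2}{2\lambda_1\epsilon\ln(\frac{1}{1-\delta})},\, C_{\lambda_1,\alpha,\beta,S}]$, and the theorem amounts to asserting the coexistence of the two bounds.

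The one nontrivial point, and the step I expect to be the real obstacle, is verifying that this interval is nonempty, i.e., that the privacy-driven lower bound on $c$ does not exceed the utility-driven upper bound. This reduces to checking the compatibility inequality $\frac{\gamma_s^2}{2\lambda_1\epsilon\ln(\frac{1}{1-\delta})} \leq \lambda_1\sqrt{\pi}(\frac{\alpha^2\beta S^2}{4\sqrt{2}}+\frac{\alpha^2\sqrt{\pi}}{8}+\alpha+\frac{2}{\sqrt{\pi}})-2$, a constraint tying together $\alpha$, $\beta$, $\epsilon$, $\delta$, $S$, and $\lambda_1$. I would argue that for a fixed privacy budget $(\epsilon,\delta)$ the right-hand side grows without bound in $S$ (quadratically through the $\alpha^2\beta S^2$ term) and increases with the utility tolerance $\alpha$, so the interval can always be rendered nonempty by admitting enough users or relaxing the utility requirement; this is precisely the qualitative trade-off the theorem is meant to capture. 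Once nonemptiness is established, the two component theorems supply the conclusions verbatim and the proof closes.
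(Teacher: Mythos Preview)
Your approach is correct and matches the paper's own proof, which simply states that the theorem follows immediately from Theorems~\ref{utility} and~\ref{privacy}. However, your identified ``real obstacle'' is misplaced: the theorem is stated with both bounds on $c$ as \emph{hypotheses} (``provided $c\leq\ldots$ and $c\geq\ldots$''), so nonemptiness of the interval is not something you need to establish for the proof---if no such $c$ exists the statement is vacuous. The paper does discuss the compatibility inequality and the trade-off it encodes, but only \emph{after} the proof, as commentary on when the theorem is applicable, not as part of the argument itself.
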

\begin{proof}
Based on the Theorem \ref{utility} and \ref{privacy}, Theorem \ref{tradeoff_theorem} holds immediately.
\end{proof}

Theorem \ref{tradeoff_theorem} provides a guideline on how to choose a proper $c$ to achieve the trade-off between utility and privacy. To have a valid $c$, we must have the upper bound of $c$ derived from utility analysis to be greater than or equal to the lower bound of $c$ derived from privacy analysis. Especially, to have at least one $c$ exist, the two bounds should be the same, and thus we have:
 \begin{equation}
 \lambda_1\sqrt{\pi}(\frac{\alpha^2\beta S^2}{4\sqrt{2}}+\frac{\alpha^2\sqrt{\pi}}{8}+\alpha+\frac{2}{\sqrt{\pi}})-2=\frac{\gamma_s^2}{2\lambda_1\epsilon\ln(\frac{1}{1-\delta})}.
 \end{equation}
 It is obvious that stronger privacy (smaller $\epsilon$ and $\delta$) can be satisfied when sacrificing utility (increase in $\alpha$ and $\beta$), and better utility can be achieved when privacy is comprised. This trade-off is related to the characteristics of data, i.e., the error distribution in the original data, which is controlled by hyper-parameter $\lambda_1$. A larger $\lambda_1$ indicates a higher chance that users' original information is similar enough, and thus strong privacy and good utility are possible. Similarly, a smaller $\lambda_1$ leads to more challenging privacy protection in which less privacy and utility gain are expected. In the following section, we will experimentally verify this trade-off.

\section{Experiment}
\label{sec:exp}

In the previous section, we quantified the trade-off between the utility of aggregated results and the privacy of users. Now, we illustrate this result via a set of experiments: (1) We demonstrate how the proposed mechanism achieves good privacy and utility on simulated datasets, and evaluate the performance under different scenarios. (2) The privacy and utility trade-off is further demonstrated on a real crowd sensing system. We also demonstrate how good utility is achieved by the proposed mechanism which can automatically assign user weights based on information quality. (3) We show that the proposed method is scalable to large-scale data by conducting efficiency tests.

\subsection{Experiments on Synthetic Dataset}

In this part, we show experimental results on synthetic datasets. As mentioned in Section \ref{sec:method}, the error made by each individual user can be captured by a normal distribution $N(0,{\sigma_s}^2)$, where the variance indicates the quality of his information. Therefore, we simulate 150 users with various qualities by setting different ${\sigma_s}^2$, and generate their provided information for 30 objects based on both the ground truth information and the sampled error. We regard this dataset as the original data contributed by the users.

For each individual user, we follow Algorithm 2 to perturb his information. Specifically, we choose a hyper-parameter $\lambda_2$, and generate each user's noise parameter ${\delta_s}^2$ from exponential function with $\lambda_2$. Then each user's data is perturbed by injecting sampled random noise based on Gaussian distribution with ${\delta_s}^2$ as variance. We then conduct truth discovery on the perturbed data.

To measure the utility of aggregation, we compare the aggregated results based on original data and perturbed data, and quantify their difference. Here, we adopt the commonly used $L^1$-norm distance, i.e., the mean of absolute distance (MAE) on all objects. For this measure, lower value indicates better utility.

Note that the privacy parameter $\epsilon$ is defined in a different way compared with traditional differential privacy. As shown in Definition \ref{our_privacy_definition}, the same $\epsilon$ in local differential privacy indicates stronger privacy protection as the definition is based on the perturbation of one record. Nonetheless, we can still observe low $\epsilon$ in the following experiments.

\smallskip \noindent \textsf{\textbf{Utility-Privacy Trade-off}}. Figure \ref{trade_syn} plots the utility-privacy trade-off on the synthetic dataset. Figure \ref{1_1} shows the trade-off in terms of privacy parameters $\epsilon$ and MAE. To provide some intuition about how much noise the approach can tolerate, we show the average noise level corresponding to different $\epsilon$ in Figure \ref{1_3}. We can see that in order to guarantee stronger privacy (smaller $\epsilon$), larger noise is needed. However, the added large noise only incurs small loss in utility. From Figure \ref{1_1}, we can observe that the utility changes very slowly and the magnitude is quite small compared with the added noise.  To facilitate comparison, we use the same $x$-axis and make the scale of $y$-axis the same for these plots. As can be seen, when the average of added noise reaches closer to 1, the average loss in utility is less than 0.1 (only 1/10 of the noise). This  demonstrates the advantage of the proposed mechanism in maintaining good utility even when high noise is added. 
\begin{figure}[tbh]
\centering
\subfloat [MAE]{
\includegraphics[width=0.21\textwidth]{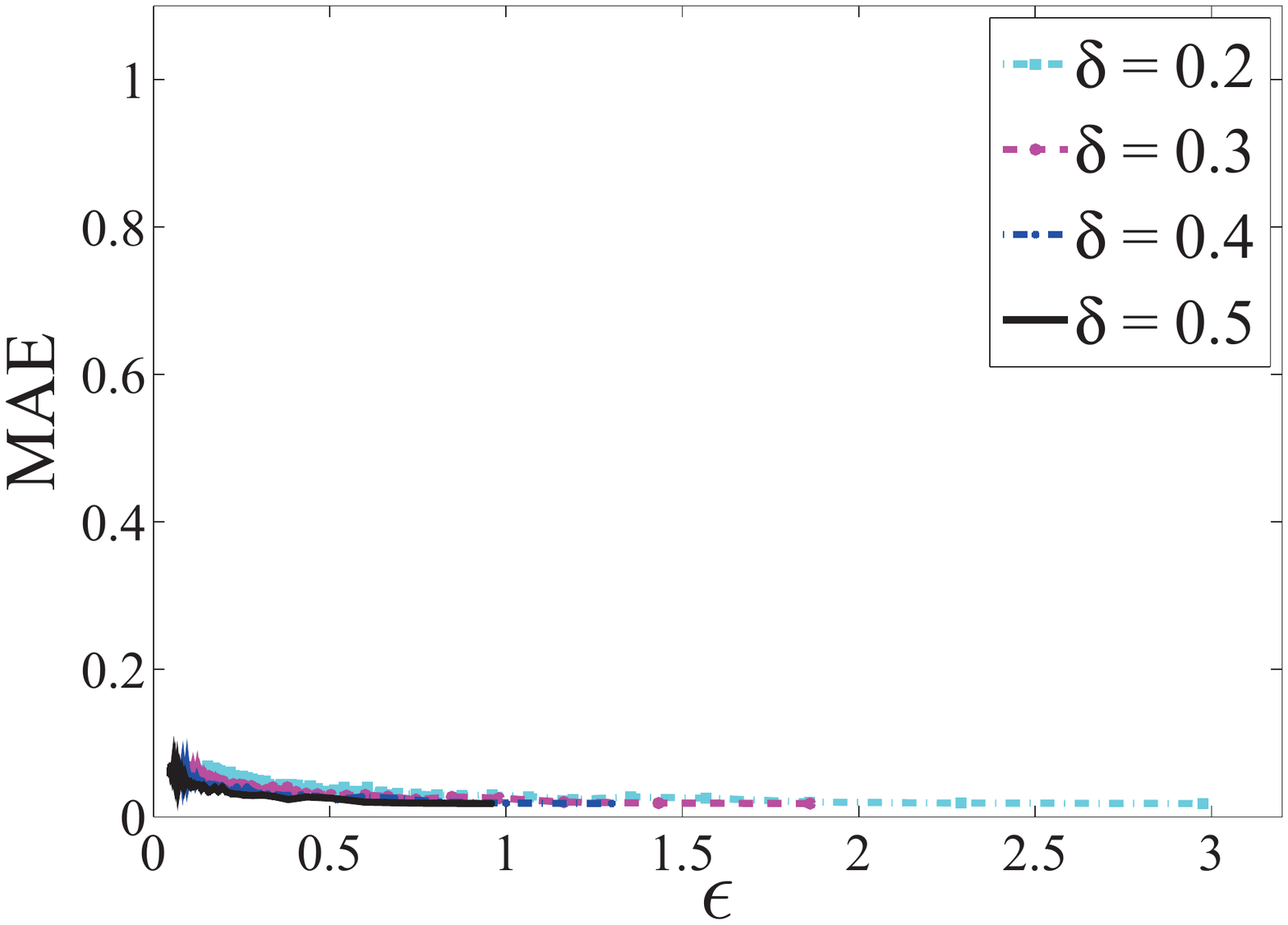}
\label{1_1}
}
\hfill
\subfloat [Noise]{
\includegraphics[width=0.21\textwidth]{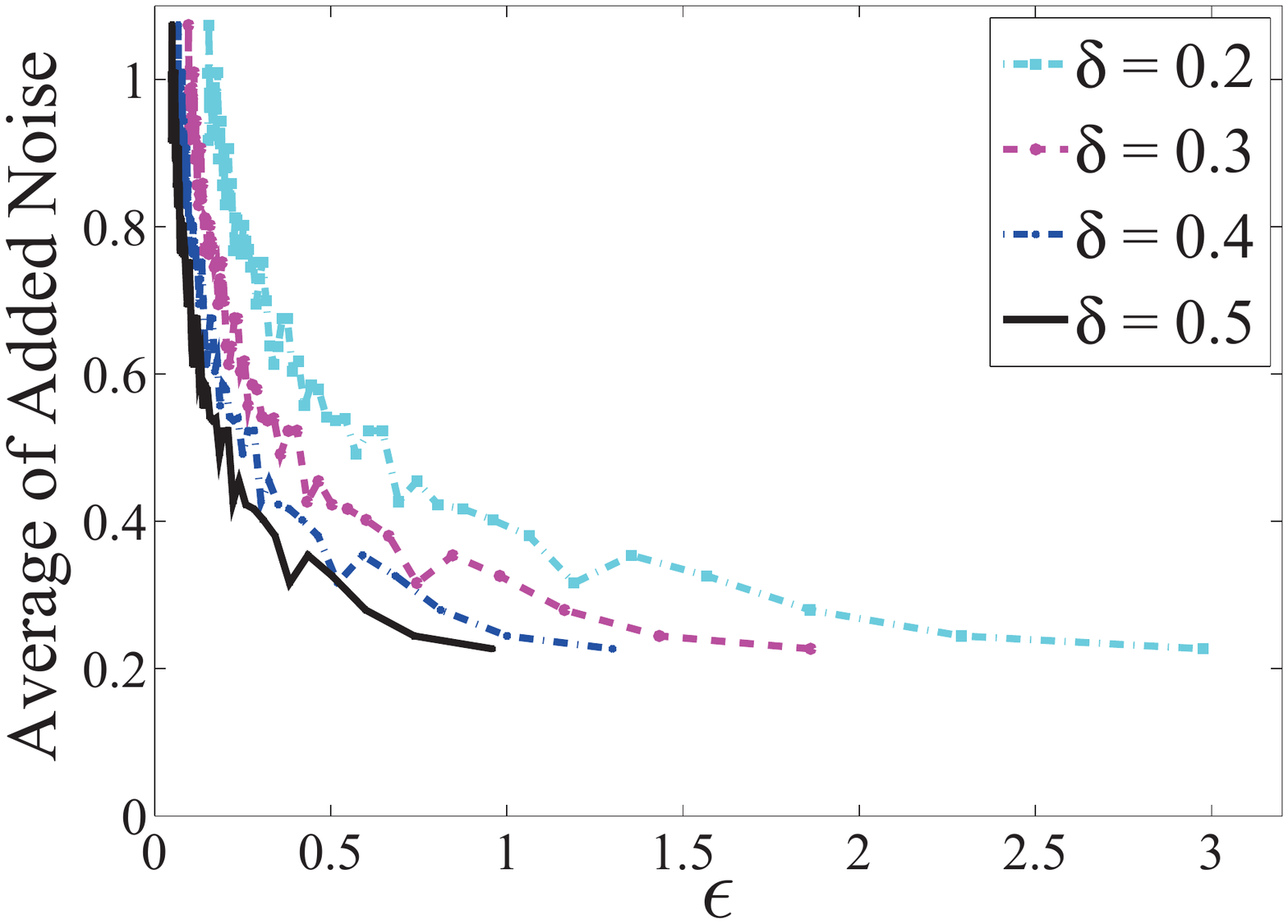}
\label{1_3}
}
\vspace{-0.1in}
\caption{Utility-Privacy Trade-off on Synthetic Dataset}
\label{trade_syn}
\end{figure}

\smallskip \noindent \textsf{\textbf{Effect of $\lambda_1$}}. In Theorem \ref{tradeoff_theorem}, we show that $\lambda_1$ is related to both utility and privacy and here we demonstrate its effect empirically.  $\lambda_1$ captures the information quality distribution of original data. As shown in Figure \ref{2_1}, when $\lambda_1$ is big, the variance of the distribution used to sample noise variance is small, so the quality of all users are relatively good. Then users tend to contribute similar information for the same object, so small noise can hide users' information. On the other hand, when $\lambda_1$ is small, user-provided information may be quite different due to the low quality controlled by the error distribution, and thus only large noise can preserve their privacy. Figure \ref{2_2} demonstrates utility variation under different $\lambda_1$.  With small $\lambda_1$, large noise has to be added so the utility will be affected more. The message we can get is that it is easier to maintain both privacy and utility if the original data has high quality and it is more challenging when the original data is noisy.
\begin{figure}[tbh]
\centering
\subfloat [MAE]{
\includegraphics[width=0.21\textwidth]{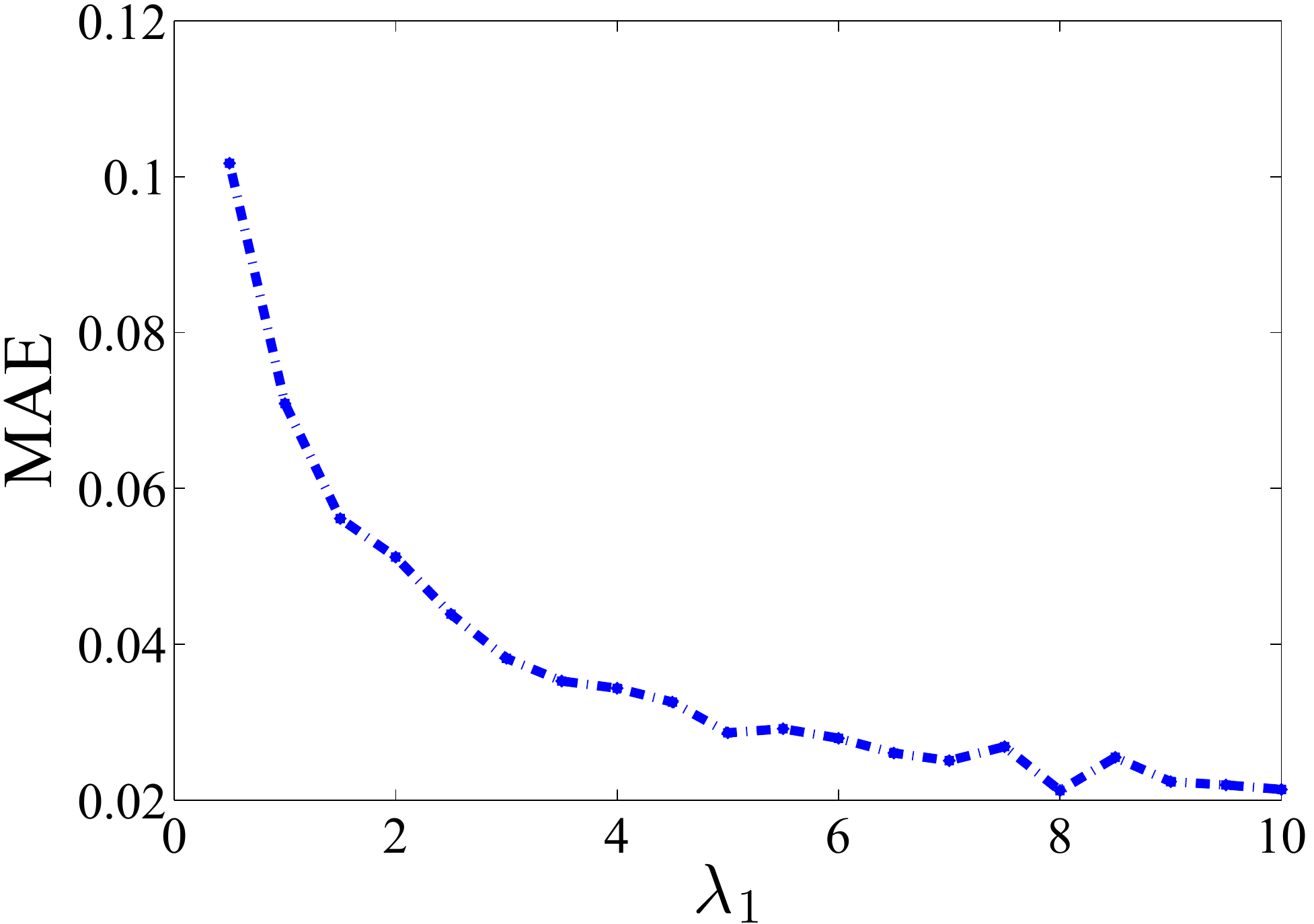}
\label{2_2}
}
\hfill
\subfloat [Noise]{
\includegraphics[width=0.21\textwidth]{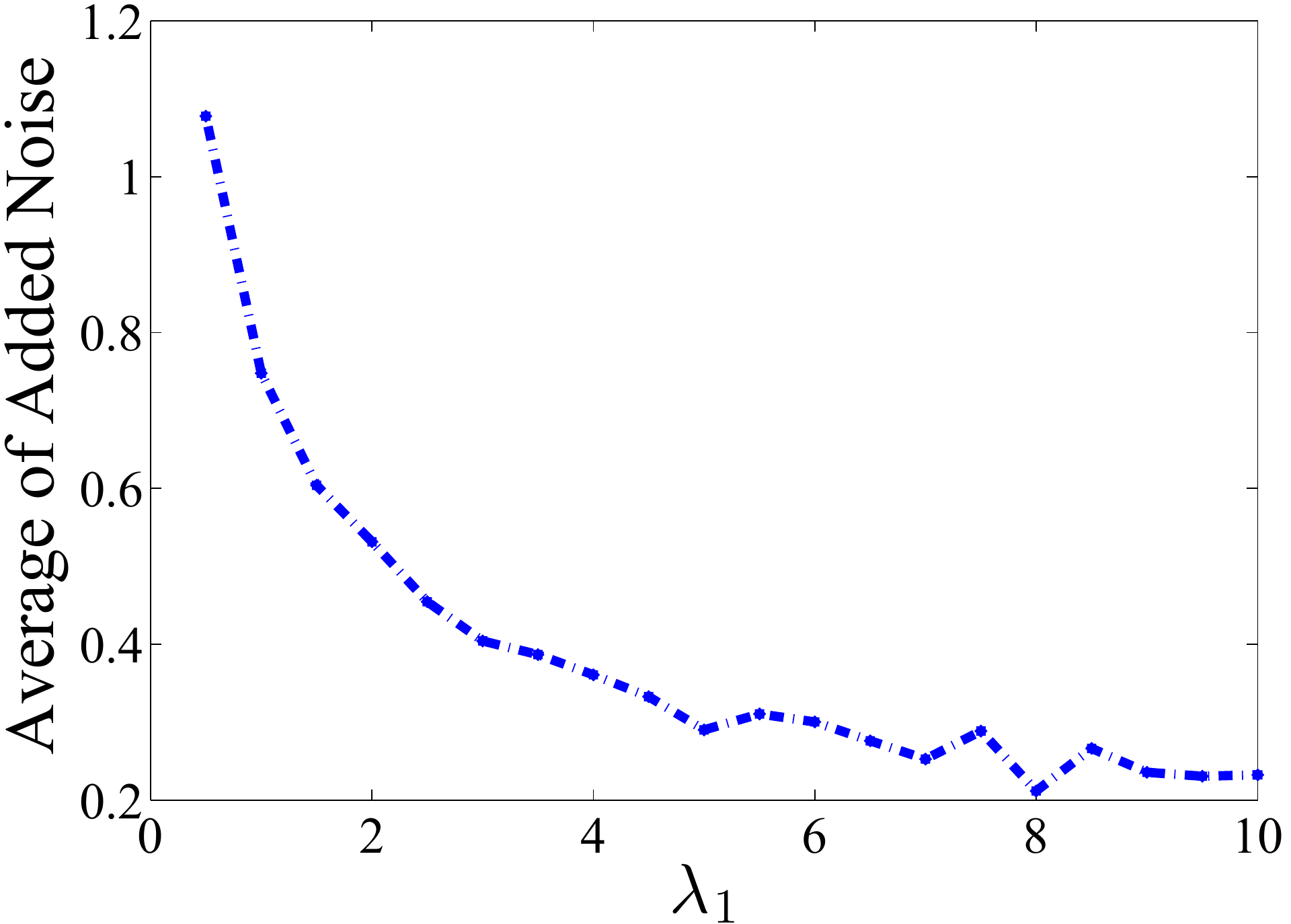}
\label{2_1}
}
\caption{Effect of $\lambda_1$ (Parameter of Error Distribution in Original Data)}
\label{effect_lambda}
\end{figure}

\smallskip \noindent \textsf{\textbf{Effect of $S$}}. Next, we study the effect of $S$, i.e., the number of users that are involved in the aggregation task. In the proposed mechanism, all users act independently to add noise and they do not rely on each other. Hence the average noise will not be affected by the number of users. This phenomenon is demonstrated in Figure \ref{3_1} in which the average of added noise keeps the same as $S$ increases. On the other hand, Figure \ref{3_2} shows that having more users can help utility. The reason is that truth discovery approaches can estimate user weights better when more information is collected, and thus obtain better aggregation results. This is consistent with our theoretical analysis on utility in Theorem \ref{utility}: To achieve the same level of utility, we can tolerate larger noise if more users are involved in the task.
\begin{figure}[tbh]
\centering
\subfloat [MAE]{
\includegraphics[width=0.21\textwidth]{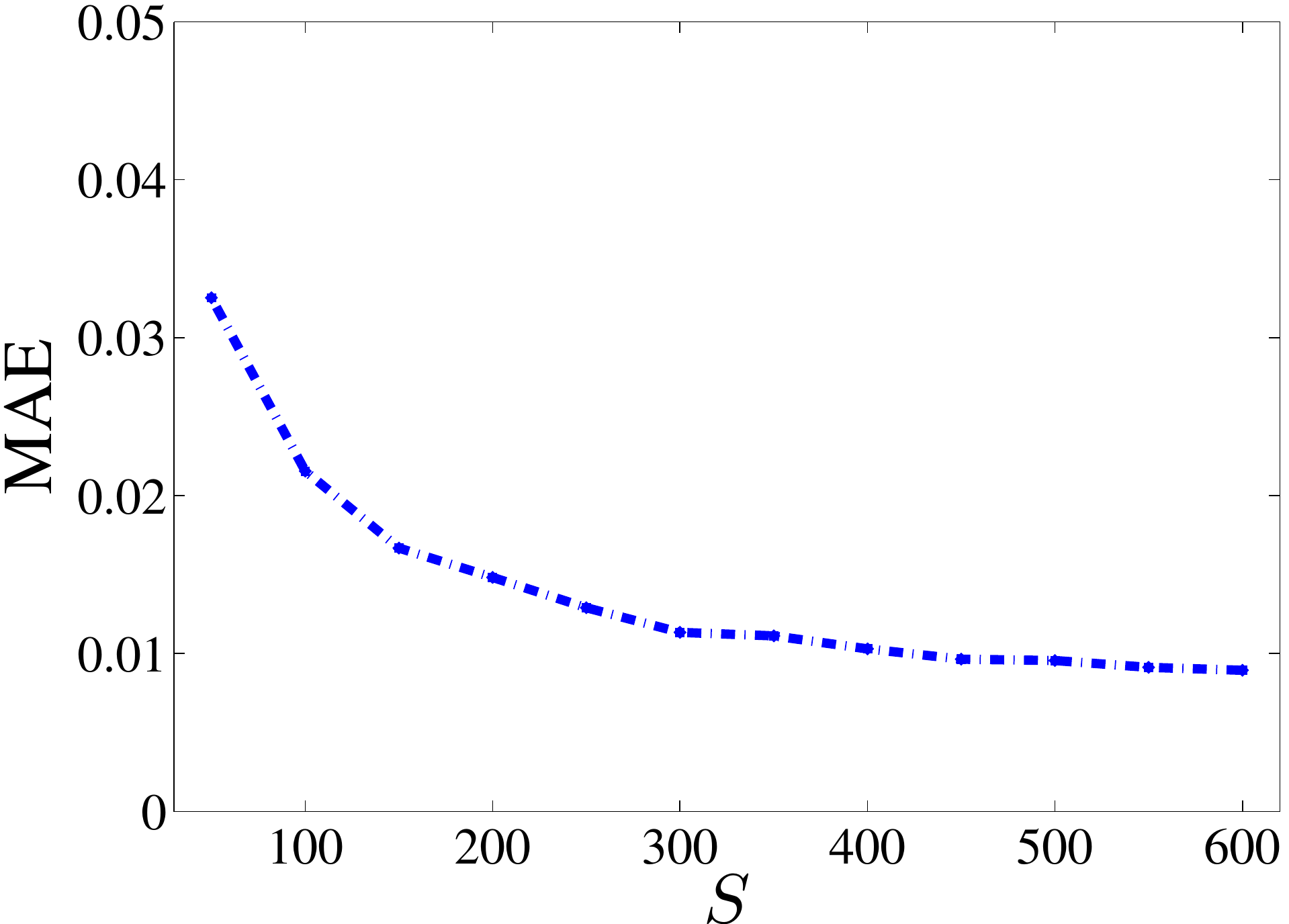}
\label{3_2}
}
\hfill
\subfloat [Noise]{
\includegraphics[width=0.21\textwidth]{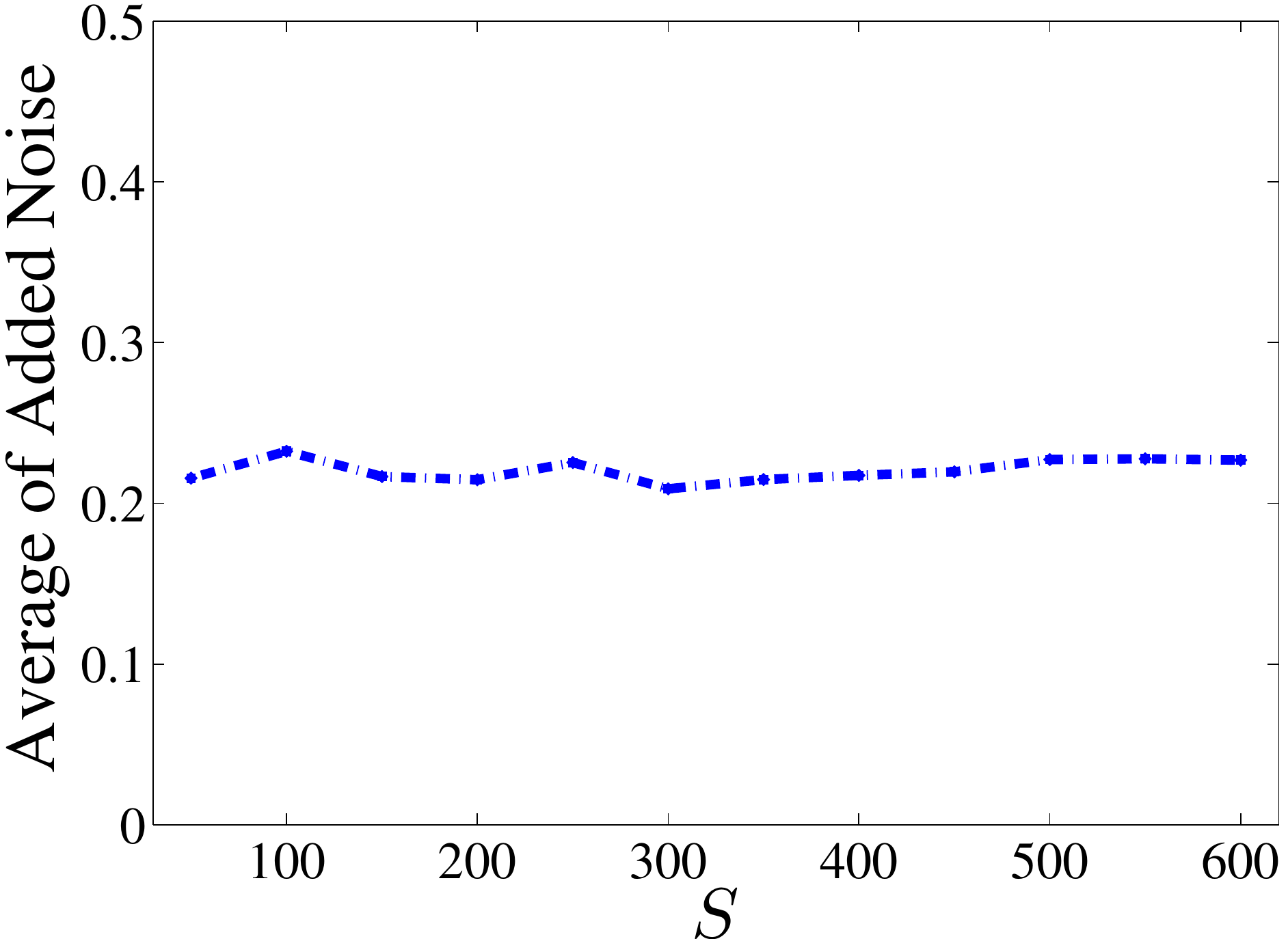}
\label{3_1}
}
\caption{Effect of $S$ (Number of Users)}
\label{effect_s}
\end{figure}

\smallskip \noindent \textsf{\textbf{Truth Discovery Methods}}.
As discussed in Algorithm 2, this mechanism can work with any specific method that satisfies the general principle of truth discovery. In the experiments shown so far, we adopt the recent CRH \cite{crh_sigmod14} method. Here we present results on a different truth discovery method to illustrate the mechanism's ability to generalize to other approaches. We apply another start-of-the-art truth discovery approach that can be applied to continuous data, GTM \cite{bo_gtm_qdb12}, and show the results in Figure \ref{trade_gtm}. The patterns of utility privacy trade-off are similar compared to those based on CRH.
\begin{figure}[tbh]
\centering
\subfloat [MAE]{
\includegraphics[width=0.21\textwidth]{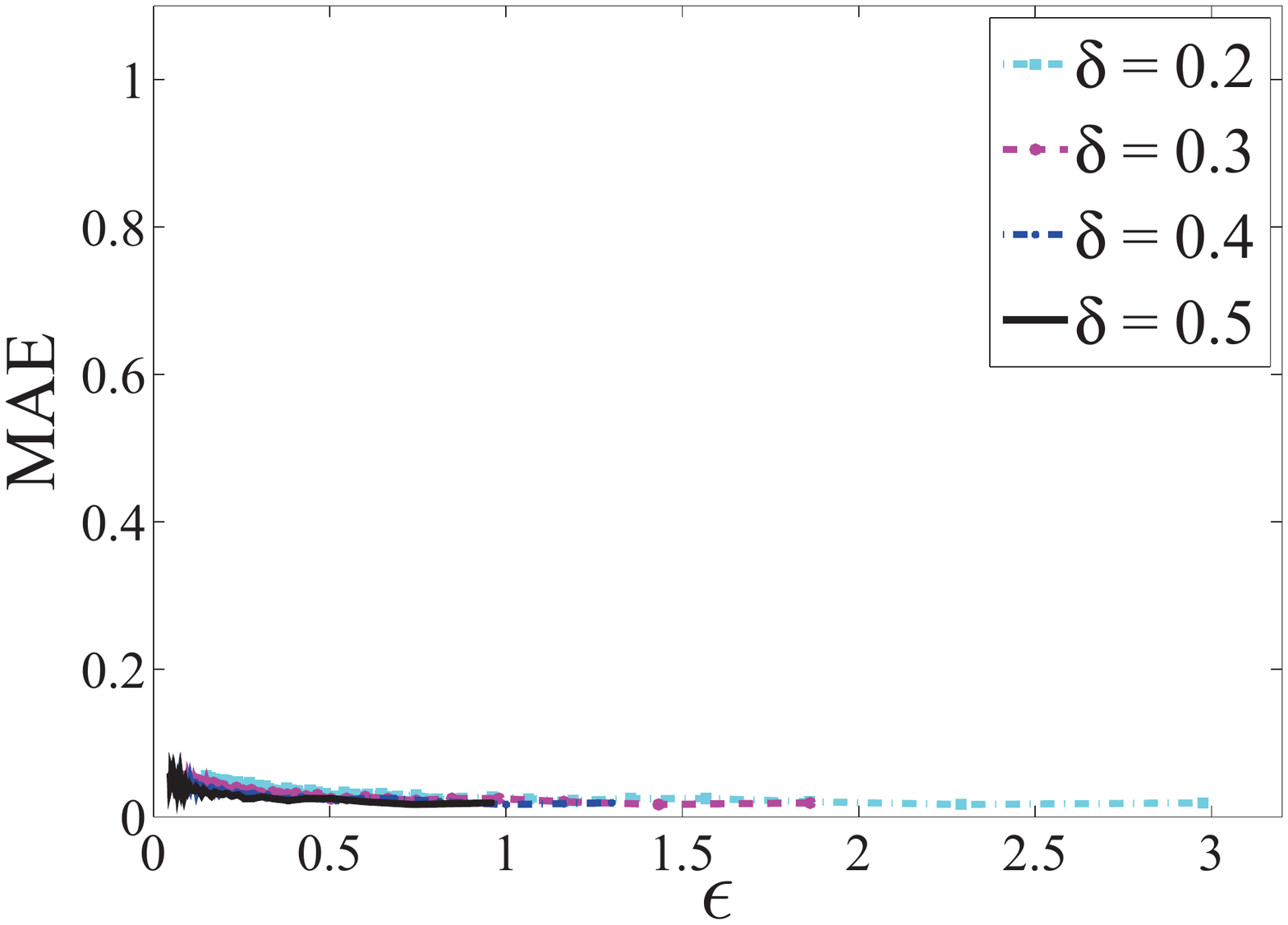}
}
\hfill
\subfloat [Noise]{
\includegraphics[width=0.21\textwidth]{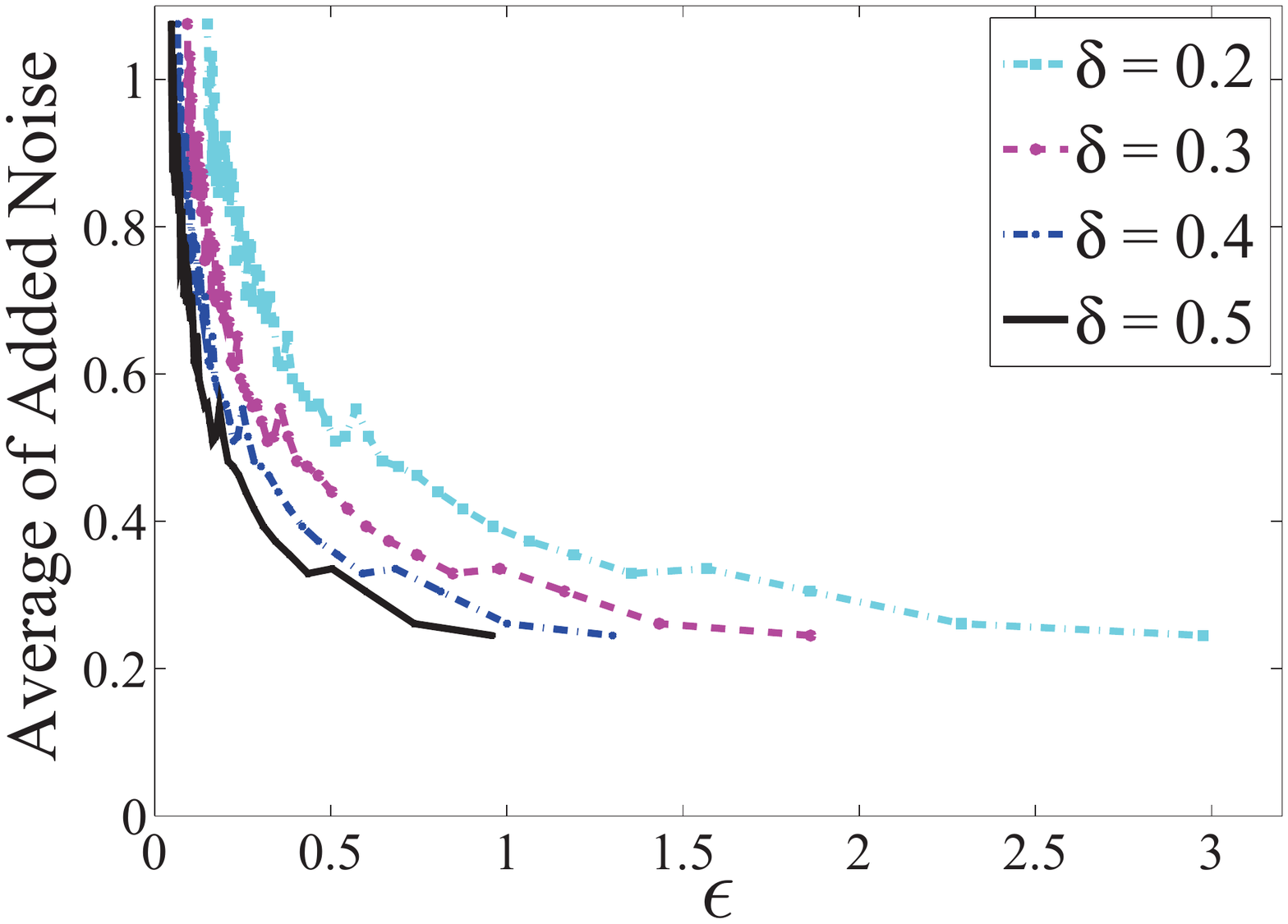}
}
\caption{Utility-Privacy Trade-off on Synthetic Dataset (GTM)}
\label{trade_gtm}
\end{figure}

\subsection{Experiments on Crowd Sensing Application}

In this section, we show experimental results on a real world crowd sensing application to illustrate the effectiveness of the proposed mechanism.  The application is indoor floorplan construction \cite{shen2013walkie,gao2014jigsaw}, which has gained growing interest as many location-based services are built upon this task. The goal is to automatically construct indoor floorplan from sensor data collected from smartphone users. However, since the private personal activities of the phone users are usually encoded in the sensor readings, the users may not willing to share their data without privacy protection guarantee. Here we focus on one task of indoor floorplan construction: Estimate the distance between two location points along a straight hallway by aggregating user data. 
Specifically, we select $129$ hallway segments as the objects and collect data from $247$ smartphone users via a developed Android app.
We then obtain the distance each user has traveled on each hallway segment by multiplying user step size by step count. Due to different walking patterns and in-phone sensor quality, the distances obtained by different users on the same segment can be quite different. The goal is to derive the true length of hallways by aggregating user-provided distance information. We let each user add noise to their original information following the procedure in Algorithm \ref{alg:proposed}. We vary the hyper parameter $\lambda_2$ to collect multiple sets of perturbed data, and the truth discovery method adopted here is still CRH.

\smallskip \noindent \textsf{\textbf{Utility-Privacy Trade-off}}. We still adopt MAE to measure aggregation utility. Figure \ref{trade_indoor} shows the utility and privacy trade-off. Compared with Figure \ref{trade_syn}, we observe the same pattern that is shown on synthetic data. This confirms that even when the added noise is quite large (strong privacy), good utility can be achieved under the proposed mechanism.
\begin{figure}[tbh]
\centering
\subfloat [MAE]{
\includegraphics[width=0.21\textwidth]{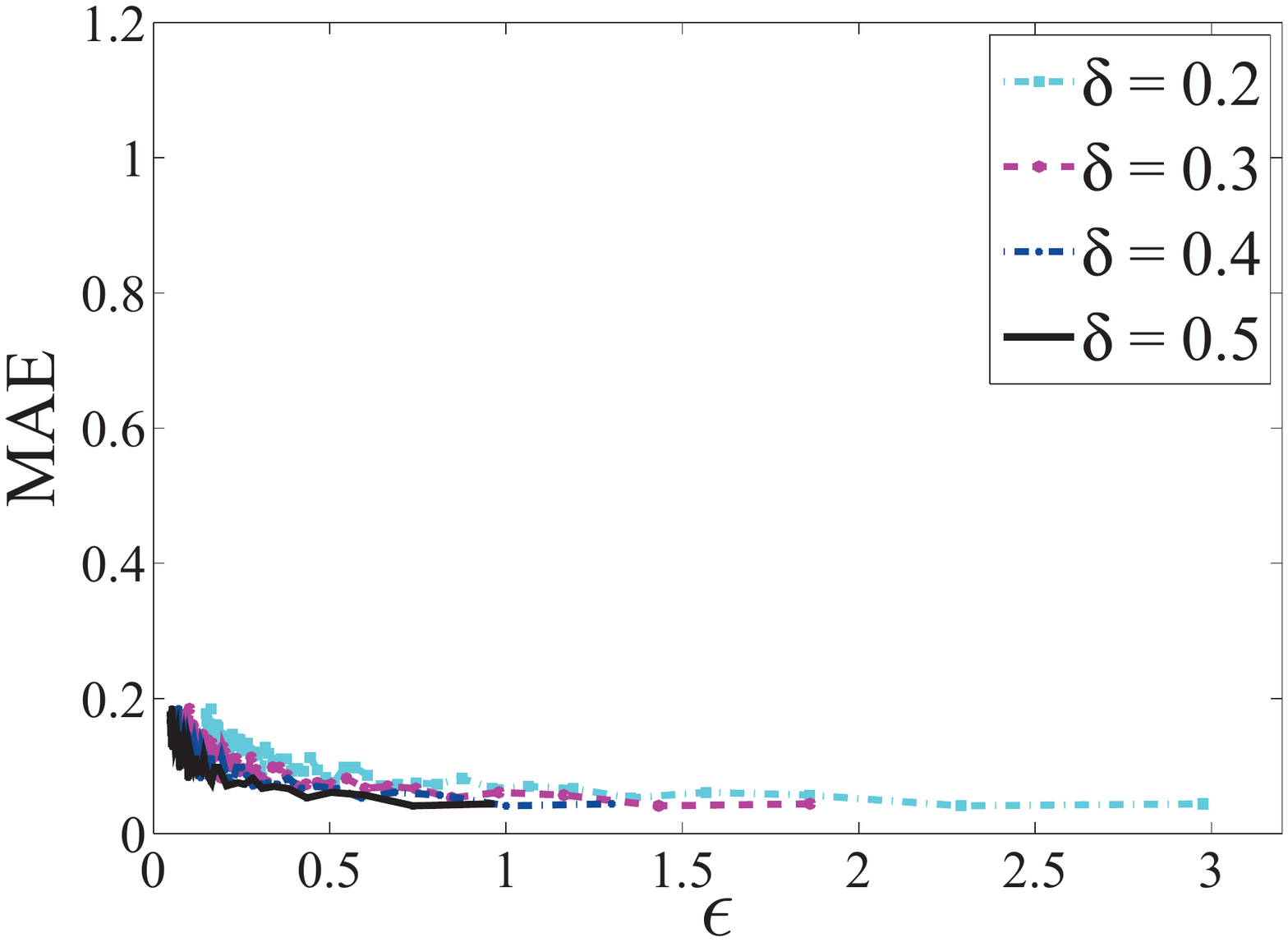}
}
\hfill
\subfloat [Noise]{
\includegraphics[width=0.21\textwidth]{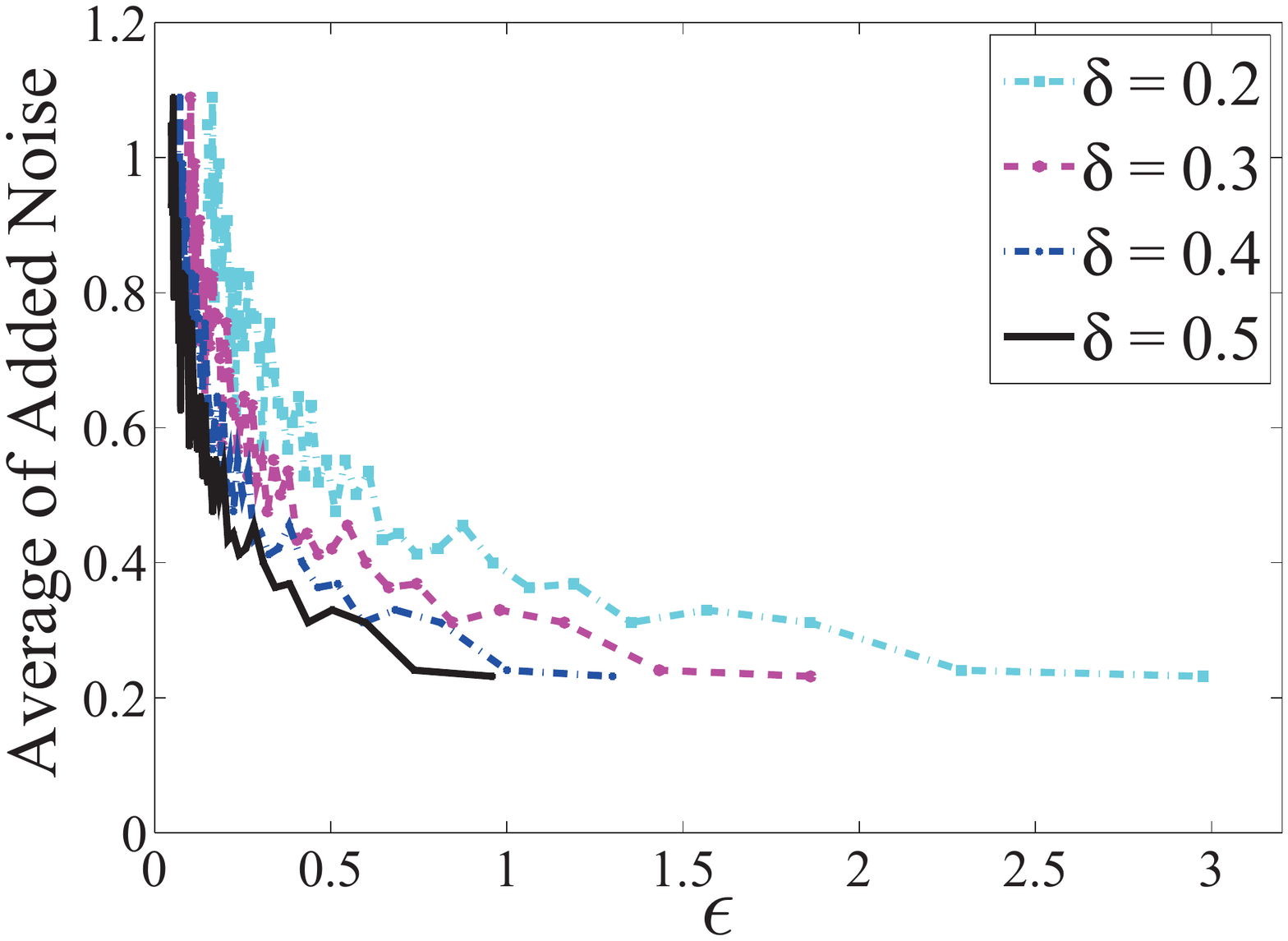}
}
\caption{Utility-Privacy Trade-off on Indoor Floorplan Dataset}
\label{trade_indoor}
\end{figure}

\smallskip \noindent \textsf{\textbf{Weight Comparison}}. As discussed, the advantage of the proposed privacy-preserving truth discovery mechanism in preserving good utility can be attributed to the weight estimation scheme. We illustrate this fact on the indoor floorplan dataset by comparing weights estimated from original and perturbed data. Figure \ref{weight_comparison} shows the estimated weights for 7 randomly selected users by the proposed method on original data and perturbed data using blue dotted lines. We obtain the groundtruth distance by measuring the hallway segments manually. This enables us to derive the true weight of each user for both cases, which are shown as black solid curves. By comparing true and estimated weights, we can observe the following phenomena: (1) The weights estimated by the proposed method are mostly consistent with the true weights, and thus weighted aggregation can outperform naive aggregation solutions such as mean or median in finding true information. (2) Compared with information quality on original data (Figure \ref{weight_clean}), we find that the $5$-th user adds large noise to protect his information, and thus on perturbed data, his weight is adjusted to a smaller value. This shows how the proposed mechanism can assign user weights based on user information quality, as explained in Section \ref{sec:method}. Correspondingly, the effect of added noise can be reduced during weighted aggregation, and thus aggregated result does not deviate much from the result before perturbation.
\begin{figure}[h]
\centering
\subfloat [Original Data]{
\includegraphics[width=0.21\textwidth]{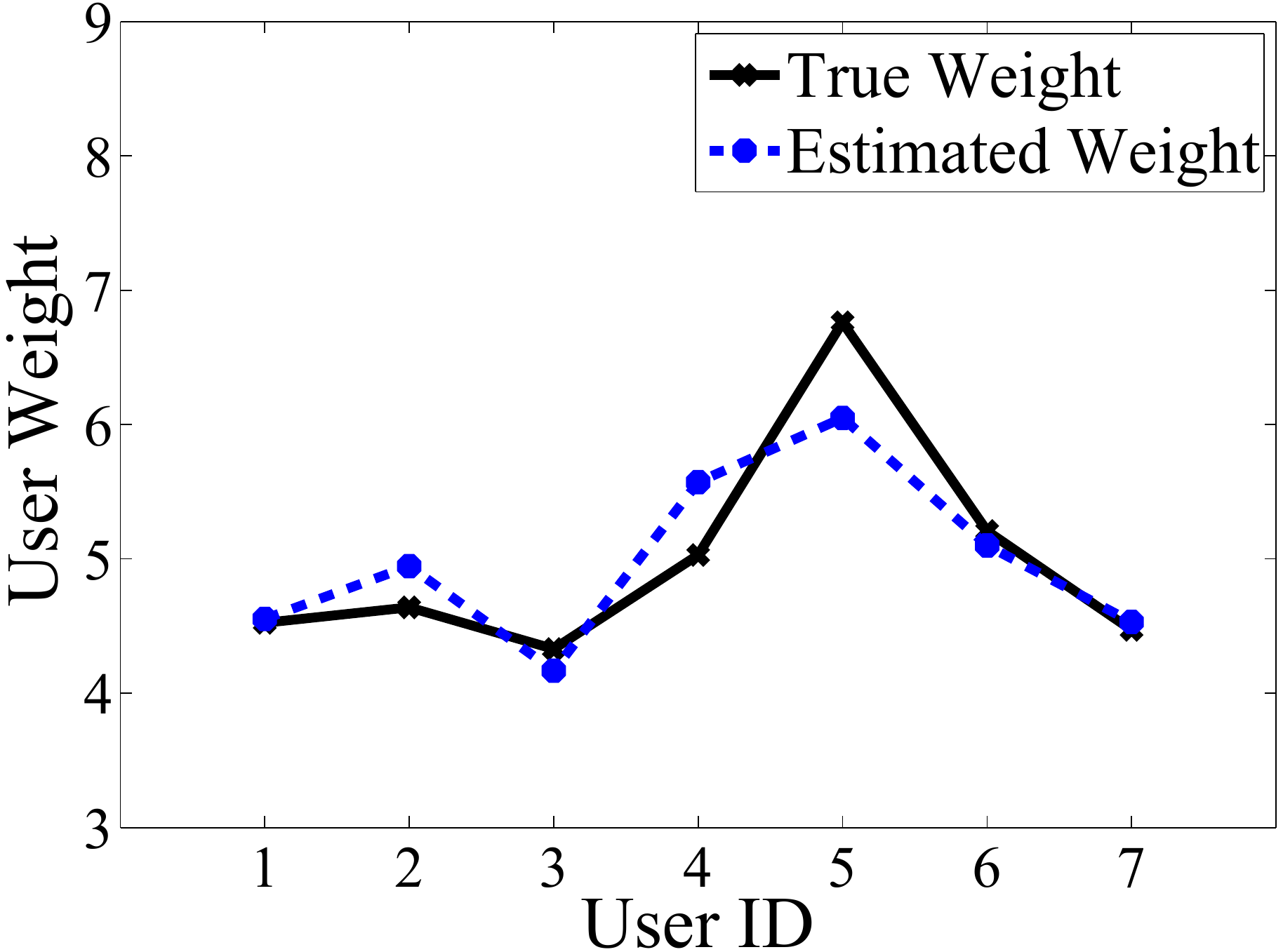}
\label{weight_clean}}
\hfill
\subfloat [Perturbed Data]{
\includegraphics[width=0.21\textwidth]{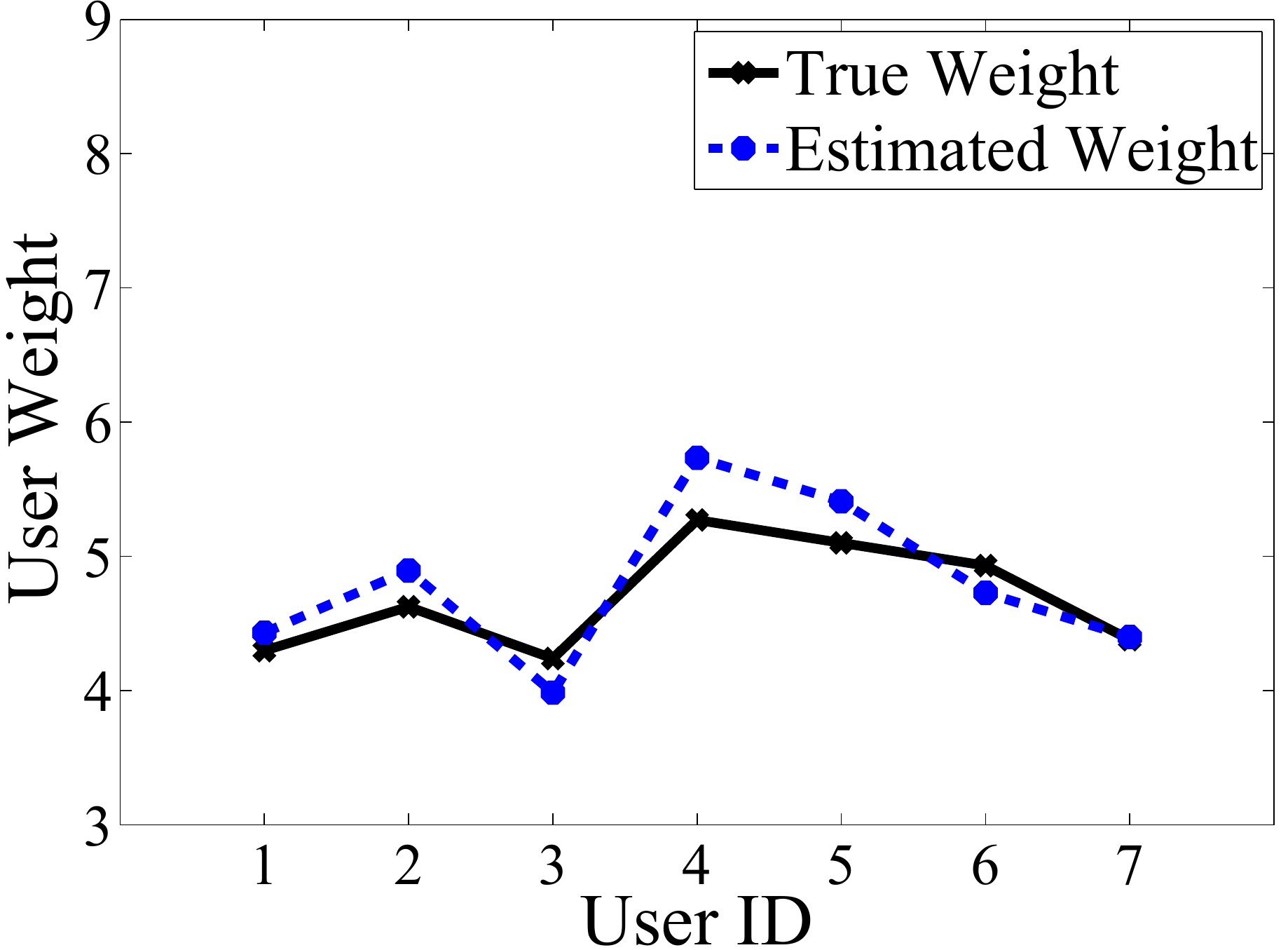}
\label{weight_noise}
}
\caption{Weight Comparison}
\label{weight_comparison}
\end{figure}

\subsection{Efficiency}
The last experiment shows the efficiency of the proposed mechanism. According to Algorithm \ref{alg:proposed}, the running time mainly comes from the execution of two parts, data perturbation and truth discovery procedure. Compared with time complexity of truth discovery, the time to add random noise is negligible, so we focus on analyzing the running time of truth discovery when different noise level is adopted.

Truth discovery is an iterative procedure whose running time is controlled by the number of iterations needed to achieve convergence. Existing literature has demonstrated that the running time of truth discovery increases linearly with respect to the number of objects \cite{crh_sigmod14} when the number of iterations is fixed, which is highly efficient. Therefore, in this experiment, we test the effect of noise level on running time, i.e., we check if the number of iterations is affected by noise level which leads to changes in running time. In practice, we set the convergence criterion for truth discovery in the following way: If the change in aggregated results is smaller than a threshold, the algorithm is terminated. We set the same threshold, vary the added noise, and record the running time of truth discovery on original and perturbed data. Figure \ref{running_time} reports the results in which the solid red line shows the running time of truth discovery on original data, and blue dots represent the running time on data with certain added noise. We can observe that running time after perturbation is slightly bigger than that on original data, but the running time does not change much when noise level varies. This shows that perturbation on user data does not change the running time of truth discovery approach, which guarantees practical deployment of the proposed mechanism on large-scale crowdsourcing applications.
\begin{figure}[tbh]
\centering
\includegraphics[width=0.28\textwidth]{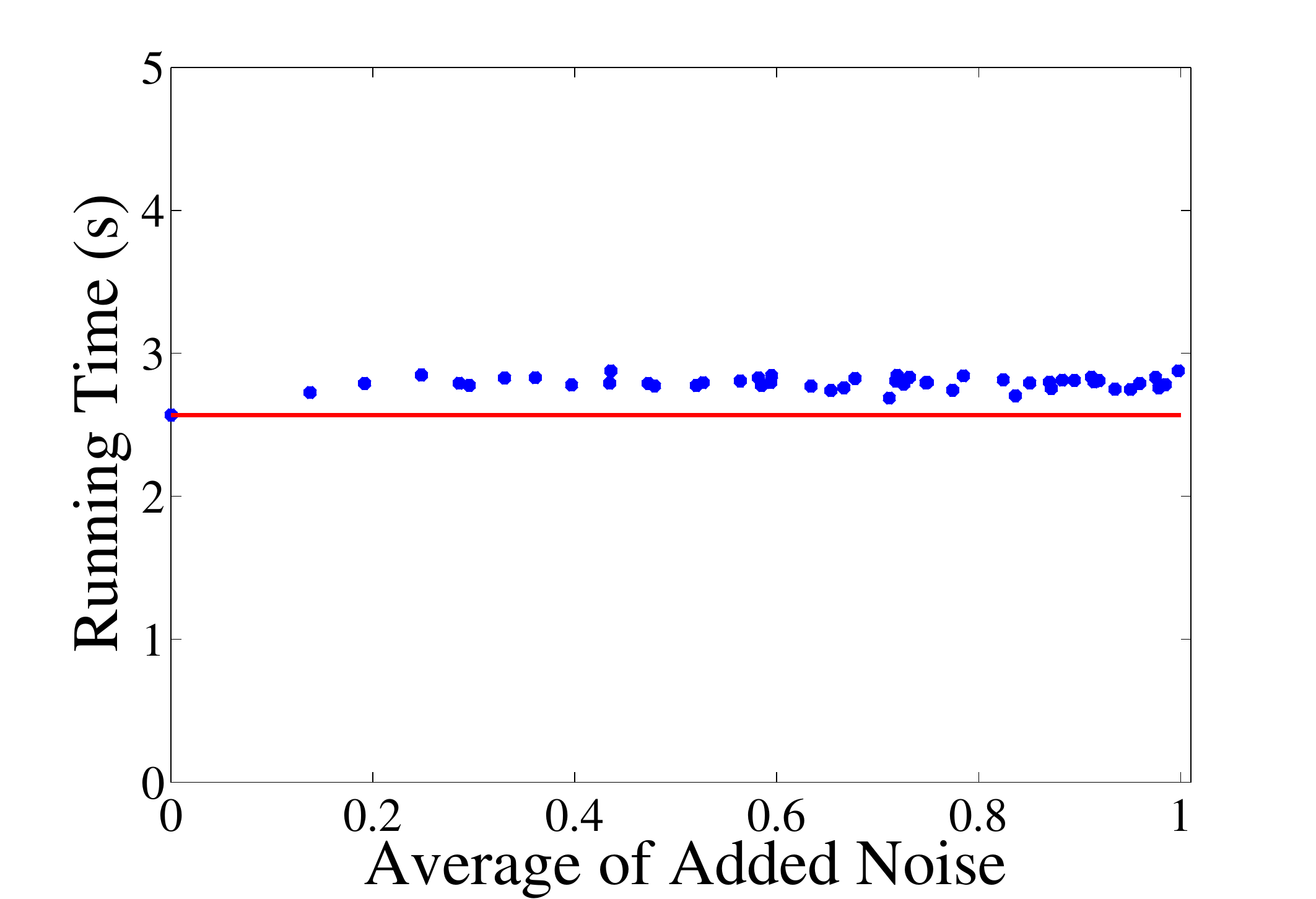}
\caption{Efficiency Study}
\label{running_time}
\end{figure}

\section{Related Work}
\label{sec:related}

Truth discovery has emerged as a hot topic for conflict resolution in data integration, and been applied in many other domains \cite{WKL+12,meng2015truth,liweighted}. By estimating source (user) quality from the data, the aggregated results are more reliable compared with naive solutions such as voting or averaging. Existing approaches include TruthFinder \cite{yin_kdd07}, AccuSim \cite{luna_survey_vldb12}, CRH \cite{crh_sigmod14}, etc. 
However, all these truth discovery approaches do not address the privacy concern in data collection. There are some recent work \cite{miao2015cloudpp,xu2016achieving,miao2017privacy,zheng2017privacy,zheng2018learning} which deal with this privacy concern based on encryption or secure multi-party computation techniques. Compared to them, the proposed mechanism in this paper provides a much more efficient perturbation based solution to privacy-preserving truth discovery.

Differential privacy \cite{dwork2006differential,kifer2012rigorous,mulle2015privacy,chen2011publishing,fan2014adaptive,alhadidi2012secure,machanavajjhala2016differential,goryczka2013secure,ying2013linear} is a quantified privacy definition for protecting sensitive information that needs to be released, and it balances the trade-off between privacy protection and utility loss. 
Among the related work of differential privacy, distributed differential privacy \cite{kasiviswanathan2011can,blum2005practical} shares some similarity with our work, and it enables individual information sources to add noise separately. However, in distributed differential privacy, the server is still assumed to be trusted and the protection is against information leakage to third parties via statistical queries. Hence their setting is different from the one in this paper. Another relevant topic is local differential privacy \cite{duchi2013local,dwork2013algorithmic,kairouz2014extremal} which deals with the scenario that users do not trust the server. In privacy analysis, we quantify the user privacy based on local differential privacy.

Among the related work on privacy-preserving data aggregation, some provide users with secure protocols that allow users to submit their sensitive information to a collector \cite{kajinopreserving,miao2015cloudpp,xu2016achieving,miao2017privacy,zheng2017privacy,zheng2018learning}. However, these methods are mainly based on encryption or secure multi-party computation, which requires expensive computation or communication. Therefore, none of them is an ideal solution to privacy-preserving truth discovery which usually involves a large number of users and thus requires efficient strategies.

On the other hand, some related work on privacy-preserving data aggregation are perturbation-based. These methods are designed for the computation of some statistics \cite{ganti2008poolview,zhang2012data}. They are not designed for truth discovery that automatically infers user weights from the data and conducts weighted aggregation. Thus these methods cannot be easily applied to privacy-preserving truth discovery.

Note that the aforementioned privacy-preserving data aggregation approaches deal with tasks that are different from truth discovery. Truth discovery automatically estimates user weights from the data and incorporates such weights in the truth computation. The iterative procedure of weight estimation and weighted aggregation steps in truth discovery make it quite different from other aggregation methods. Therefore, the proposed privacy-preserving truth discovery mechanism and analysis, which capture the unique characteristics of truth discovery task, differ from those in related work. The most relevant existing work is \cite{li2018efficient}, in which a privacy-preserving mechanism is proposed for truth discovery with categorical data, while in the paper, the proposed mechanism is for truth discovery with continuous data.

\section{Conclusions}
\label{sec:conclusion}
In order to extract reliable information from noisy crowdsourced sensory data, it is crucial to estimate the quality of individual users.
Truth discovery, in which the reliable information and user weights are inferred simultaneously, provides a nice way to mine such noisy sensory data. 
However, existing truth discovery methods fail to address the user privacy issue that arises in the data collection procedure.
In this paper, we propose a perturbation-based privacy-preserving truth discovery mechanism for crowd sensing systems. This mechanism is efficient and does not require any communication or coordination among mobile device users.
In this mechanism, each user samples a variance parameter ${\delta_s}^2$ from an exponential distribution and draws random noise from a Gaussian distribution with ${\delta_s}^2$ variance to perturb his data.
After collecting perturbed data, the server conducts weighted aggregation for final output.
As user weights can capture the information quality, the aggregated results on perturbed data do not differ much from the original aggregated values even when big noise is added.
We further analyze the performance of the proposed mechanism theoretically. We formally define $(\alpha,\beta)$-utility and $(\epsilon,\delta)$-privacy, and connect these concepts to the noise level $c$. The derived theorems show that larger noise leads to stronger privacy protection with less utility and vice versa.
We conduct experiments on not only synthetic datasets but also a crowdsourced indoor floorplan construction system. Results show that the proposed privacy-preserving truth discovery mechanism can tolerate big noise while the aggregation accuracy only drops slightly, which implies the guarantee of both good utility and strong privacy.
\appendix
\section{Special case}
For the special case where $c=1$, we have the following result in term of utility.
\begin{theorem}
	Let $c=1$, $\forall \alpha > \frac{15\sqrt{2\lambda_1}}{8}$,
	\begin{eqnarray}
	\lim_{S\to\infty} \Pr\{\frac{1}{N}\sum_{n=1}^{N}|x^*_n-\hat{x}^*_n| \geq \alpha\} = 0.
	\end{eqnarray}	\label{specialcase}
\end{theorem}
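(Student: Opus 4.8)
The plan is to reuse the opening of the proof of Theorem~\ref{utility} verbatim, because nothing in the chain of inequalities down to
\begin{align}
\frac{1}{N}\sum_{n=1}^{N}|x^*_n-\hat{x}^*_n|\;\leq\;\sqrt{\tfrac{2}{\pi}}\,\frac{1}{S^2}\sum_{s=1}^{S}\sum_{s'=1}^{S}Y_{s,s'},\qquad Y_{s,s'}=\sqrt{\sigma^2_s+\sigma^2_{s'}+\delta^2_{s'}}, \notag
\end{align}
depends on the value of $c$: it invokes only Lemma~\ref{lemma1}, the Gaussianity of $x^s_n-\hat{x}^{s'}_n$, the moment identity $\mathbb{E}(|x^s_n-\hat{x}^{s'}_n|)=\sqrt{2/\pi}\,\Sigma_{s,s'}$, and the strong law over the $N$ objects. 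So I would begin exactly at this bound and then study the behaviour of the double average $\frac{1}{S^2}\sum_{s,s'}Y_{s,s'}$ as $S\to\infty$.

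The only place the general argument genuinely uses $c\neq1$ is the closed form of the density $h$ of $Y$, which carries factors $(\lambda_2-\lambda_1)^{-1}$ and $(\lambda_2-\lambda_1)^{-2}$ and is therefore an indeterminate $0/0$ at $c=1$, i.e.\ $\lambda_2=\lambda_1$. I would sidestep this by recomputing the law of $Y$ from scratch. When $c=1$ the three variances $\sigma^2_s,\sigma^2_{s'},\delta^2_{s'}$ are i.i.d.\ $\mathrm{Exp}(\lambda_1)$, so $W:=Y^2$ is $\mathrm{Gamma}(3,\lambda_1)$ with density $\tfrac{\lambda_1^3}{2}w^2e^{-\lambda_1 w}$; the change of variables $Y=\sqrt{W}$ then gives $h(y)=\lambda_1^3y^5e^{-\lambda_1 y^2}$. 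Evaluating the Gaussian-type integrals $\int_0^\infty y^{k}e^{-\lambda_1 y^2}\,dy$ yields the two moments I need, $\mathbb{E}(Y)=\tfrac{15}{16}\sqrt{\pi/\lambda_1}$ (which also equals the $\lambda_2\to\lambda_1$ limit of the formula in Theorem~\ref{utility}) and $\mathbb{E}(Y^2)=3/\lambda_1$, hence the finite constant variance $\mathrm{Var}(Y)=\tfrac{1}{\lambda_1}\!\left(3-\tfrac{225\pi}{256}\right)>0$.

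With these moments I would close the argument by the same Chebyshev split as in Eq.~(\ref{assp1}), bounding the tail by
\begin{align}
4\sqrt{\tfrac{2}{\pi}}\,\frac{\mathrm{Var}\!\big(\tfrac{1}{S^2}\sum_{s,s'}Y_{s,s'}\big)}{(\alpha/2)^2}\;+\;\Pr\!\left\{\sqrt{\tfrac{2}{\pi}}\,\mathbb{E}(Y)\geq\tfrac{\alpha}{2}\right\}. \notag
\end{align}
The hypothesis $\alpha>\tfrac{2\sqrt{2}}{\sqrt{\pi}}\mathbb{E}(Y)=\tfrac{15\sqrt{2}}{8\sqrt{\lambda_1}}$ (the stated threshold, up to the placement of $\lambda_1$) forces the deterministic second term to be exactly $0$, while the first term tends to $0$ as $S\to\infty$ because $\mathrm{Var}(Y)$ is a fixed constant; this gives the claimed limit.

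The step I expect to demand the most care is justifying that the variance of the double average vanishes as $S\to\infty$. The summands $Y_{s,s'}$ are not mutually independent — pairs sharing a row or column index are correlated — so the naive identity $\mathrm{Var}(\tfrac{1}{S^2}\sum Y_{s,s'})=\tfrac{1}{S^2}\mathrm{Var}(Y)$ used in Theorem~\ref{utility} is not literally valid here. I would instead treat $\tfrac{1}{S^2}\sum_{s,s'}Y_{s,s'}$ as a V-statistic with a bounded-variance kernel, accounting for the cross-covariances from shared indices; its variance is then $O(1/S)$ rather than $O(1/S^2)$, but since it still tends to $0$ the limiting statement is unaffected (and is in fact more robust than the finite-$S$ bound of Theorem~\ref{utility}). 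Everything else reduces to the routine moment computations above.
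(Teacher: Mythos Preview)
Your approach is essentially identical to the paper's: recompute the law of $Y^2=\sigma^2_s+\sigma^2_{s'}+\delta^2_{s'}$ as $\mathrm{Gamma}(3,\lambda_1)$ when $c=1$, read off the first two moments of $Y$, and feed them into the same Chebyshev split used in Theorem~\ref{utility}.

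Two points where you are in fact more careful than the paper. First, your moment $\mathbb{E}(Y)=\tfrac{15}{16}\sqrt{\pi/\lambda_1}$ is the correct one; the paper writes $\tfrac{15}{16}\sqrt{\lambda_1\pi}$ and carries this into the threshold $\tfrac{15\sqrt{2\lambda_1}}{8}$, so the discrepancy you flagged (``up to the placement of $\lambda_1$'') is a typo in the paper, not in your argument. Second, the paper simply asserts that the $Y_{s,s'}$ are i.i.d.\ and writes $\mathrm{Var}\bigl(\tfrac{1}{S^2}\sum_{s,s'}Y_{s,s'}\bigr)=\tfrac{1}{S^2}\mathrm{Var}(Y)$, which ignores the shared-index correlations you identified; your V-statistic remark, giving $O(1/S)$, is the honest version of that step and suffices for the limit statement.
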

\begin{proof}
	When $c=1$, the distribution of noise variance is the same as the distribution of the error variance. Therefore, $Y^2_{s,s'} = \sigma^2_s+\sigma^2_{s'}+\delta^2_{s'}$ follows $\text{Gamma}(3,1/\lambda_1)$, with p.d.f. $h(y)=\frac{1}{2}\lambda_1^3y^2e^{-\lambda_1 y}$. It is easy to derive the p.d.f of $Y_{s,s'}$, which is $h'(y)=\lambda_1^3y^5e^{-\lambda_1 y^2}$. Moreover, $\mathbb{E}(Y) = \frac{15}{16}\sqrt{\lambda_1\pi}$ and $\mathbb{E}(Y^2) = \frac{3}{\lambda_1}$.
	
	Similar to the proof for Theorem 4.3, we have
	\begin{small}
	\begin{align}
	&\Pr\{\frac{1}{N}\sum_{n=1}^{N}|x^*_n-\hat{x}^*_n| \geq \alpha\}\leq4\sqrt{\frac{2}{\pi}}\frac{\frac{1}{S^2}\text{Var}(Y)}{(\alpha/2)^2}\notag\\
	&=4\sqrt{\frac{2}{\pi}}\frac{\frac{1}{S^2}(\mathbb{E}(Y^2) - \mathbb{E}^2(Y))}{(\alpha/2)^2}=\sqrt{\frac{2}{\pi}}\frac{48-12\lambda_1^2\pi}{S^2\alpha^2\lambda_1}.
	\end{align}
	\end{small}
	As $S$ goes to infinity, the right hand side tends to $0$. Therefore, $\forall \alpha > \frac{15\sqrt{2\lambda_1}}{8}$, we have $\lim_{S\to\infty} \Pr\{\frac{1}{N}\sum_{n=1}^{N}|x^*_n-\hat{x}^*_n| \geq \alpha\} = 0$. Thus Theorem \ref{specialcase} holds.
\end{proof}

\section{Proof of Lemma 4.4}
\begin{proof}
	To prove Eq.\ (\ref{lemma1eq}) is equivalent to prove the following inequality:
	\begin{small}
	\begin{eqnarray}
	S\sum_{s=1}^{S}w_st_s\leq\sum_{s=1}^{S}t_s\sum_{s'=1}^{S}w_{s'}.
	\end{eqnarray}
	\end{small}
	Moreover, we have:
	\begin{small}
	\begin{eqnarray}
	&&S\sum_{s=1}^{S}w_st_s - \sum_{s=1}^{S}t_s\sum_{s'=1}^{S}w_{s'}  \notag \\
    & = &S\sum_{s=1}^{S}w_st_s - \sum_{s=1}^{S}\sum_{s'=1}^{S}t_s w_{s'}\notag \\
	& = &(S - 1)\sum_{s=1}^{S}w_st_s - \sum_{s=1}^{S}\sum_{s'\neq s}^{S}t_{s} w_{s'} \notag \\
    & = &\sum_{s=1}^{\lceil \frac{S-1}{2} \rceil}\sum_{s'\leq s}(f(t_s) - f(t_{s'}))(t_s - t_{s'}).\notag
	\end{eqnarray}
    \end{small}
	According to the condition that $f$ is a monotonically decreasing function , we can obtain:
	\begin{displaymath}
	f(t_s) - f(t_{s'}) = \left\{\begin{array}{ll}
	\ \geq 0 &\quad \textrm{if $t_s - t_{s'} \leq 0$}\\
	\ \leq 0 &\quad \textrm{if $t_s - t_{s'} \geq 0$}
	\end{array} \right.
	\end{displaymath}
	It is obvious to see that for all $s$ and $s'$, if $s\neq s'$, the following inequality holds:
	\begin{eqnarray}
	(f(t_s) - f(t_{s'}))(t_s - t_{s'}) \leq 0.
	\end{eqnarray}
	Based on this observation, $\sum_{s=1}^{\lceil \frac{S-1}{2} \rceil}\sum_{s'\leq s}(f(t_s) - f(t_{s'}))(t_s - t_{s'})\leq 0$, which proves Eq.\ (\ref{lemma1eq}). Therefore, Lemma 4.4 holds.
\end{proof}

\end{document}